\newcommand{\R}{\mathbb{R}}
\renewcommand{\P}{\mathbb{P}}
\newcommand{\inr}[1]{\bigl< #1 \bigr>}
\newcommand{\Inr}[1]{\left\langle#1\right\rangle}
\newcommand{\sgn}{{\rm sgn}}
\newcommand{\E}{\mathbb{E}}
\newcommand{\eps}{\varepsilon}
\newcommand{\conv}{\mathop{\rm conv}}
\newcommand{\po}{\cP_\Omega}
\newcommand{\grad}{\nabla}
\DeclareMathOperator{\diag}{diag}
\DeclareMathOperator{\supp}{supp}
\DeclareMathOperator*{\tr}{tr}
\DeclareMathOperator*{\ra}{rank}
\DeclareMathOperator{\im}{im}
\DeclareMathOperator*{\argmin}{argmin}
\DeclareMathOperator*{\rank}{rank}
\DeclareMathOperator{\prox}{prox}
\newcommand{\Pro}{\mathbb P}
\newcommand{\1}{{\rm 1}\kern-0.24em{\rm I}}
\newcommand{\cP}{{\mathcal P}}
\newcommand \cS{{\cal S}}
\newcommand{\norm}[1]{\|#1\|}%
\newcommand{\Norm}[1]{\left|#1\right|}%
\newtheorem{Theorem}{Theorem}[section]
\newtheorem{Lemma}[Theorem]{Lemma}
\newtheorem{theorem}{Theorem}%
\newtheorem{proposition}{Proposition}%
\theoremstyle{assumption}%
{\bf}{\rm}%
\theoremstyle{remark}%
\newtheorem{remark}{Remark}%
\numberwithin{equation}{section} 
\begin{document}

\title{Weighted algorithms for compressed sensing and matrix
  completion}

\author{St{\'e}phane Ga{\"i}ffas${}^{1,3}$ \and Guillaume
  Lecu\'e${}^{2, 3}$}

\footnotetext[1]{Universit\'e Pierre et Marie Curie - Paris~6,
  Laboratoire de Statistique Th\'eorique et Appliqu\'ee. \emph{email}:
  \texttt{stephane.gaiffas@upmc.fr}}

\footnotetext[2] {CNRS, Laboratoire d'Analyse et Math\'ematiques
  appliqu\'ees, Universit\'e Paris-Est - Marne-la-vall\'ee
  \emph{email}: \texttt{guillaume.lecue@univ-mlv.fr}}

\footnotetext[3]{This work is supported by French Agence Nationale de
  la Recherce (ANR) ANR Grant \textsc{``Prognostic''}
  ANR-09-JCJC-0101-01.
}

\maketitle

\begin{abstract}
  This paper is about iteratively reweighted basis-pursuit algorithms
  for compressed sensing and matrix completion problems. In a first
  part, we give a theoretical explanation of the fact that reweighted
  basis pursuit can improve a lot upon basis pursuit for exact
  recovery in compressed sensing. We exhibit a condition that links
  the accuracy of the weights to the RIP and incoherency constants,
  which ensures exact recovery. In a second part, we introduce a new
  algorithm for matrix completion, based on the idea of iterative
  reweighting. Since a weighted nuclear ``norm'' is typically
  non-convex, it cannot be used easily as an objective function. So,
  we define a new estimator based on a fixed-point equation. We give
  empirical evidences of the fact that this new algorithm leads to
  strong improvements over nuclear
  norm minimization on simulated and real matrix completion problems. \\

  \noindent%
  \emph{Keywords.} Compressed Sensing; Weighted Basis-Pursuit; Matrix
  Completion
\end{abstract}

\section{Introduction}
\label{sec:introduction}

In this paper, we consider the statistical analysis of high
dimensional structured data in two close setups: vectors with small
support and matrices with low rank. In the first setup, known as
Compressed Sensing (CS)
\cite{MR2241189,MR1639094,MR2243152,MR2236170,IEEE-Dononho,IEEE-CT},
the aim is to reconstruct a high dimensional vector with only few
non-zero coefficients, based on a small number of linear
measurements. In the second setup, called Matrix Completion
\cite{MR2723472,fazel2002matrix,candes-recht08,Gross}, we aim at
reconstructing a small rank matrix from the observations of only a few
entries. Both problems are motivated by many practical applications in
many different domains (medical~\cite{medical},
imaging~\cite{MR1440119}, seismology~\cite{Sismo}, recommending
systems such as the Netflix Prize, etc.) as well as theoretical
challenges in many different fields of mathematics (random matrices,
geometry of Banach spaces, harmonic analysis, empirical processes
theory, etc.).  From an algorithmic viewpoint, one central idea is the
convex relaxation of the $\ell_0$-functional (the function giving the
number of non-zero coefficients of a vector) and of the rank
function. This idea gave birth to two well-known algorithms: the Basis
Pursuit algorithm~\cite{MR1639094}~ and nuclear norm
minimization~\cite{candes-recht08}. Many results have been obtained
for these two algorithms and we refer the reader to the next sections
for more details. Here we will be interested in weighted versions of
these algorithms, see~\cite{MR2461611} in the CS setup. In particular,
we will be interested in finding theoretical explanation underlying
the fact that, empirically, it is observed that weighted Basis pursuit
outperforms classical Basis Pursuit. We will also propose a way to
export the idea of reweighting into the Matrix Completion problem.

\section{Weighted basis-pursuit in Compressed Sensing}
\label{sec:cs-vectors}

One way of setting the CS problem is to ask the following question.
Starting with a $m \times N$ matrix $A$, called a \emph{sensing} or
\emph{measurement} matrix, and with a vector $x$ in $\R^N$, is it
possible to reconstruct $x$ from the linear measurements $Ax$?
Classical linear algebra theory tells that we need at least $m \geq N$
to recover $x$ from $Ax$ in order to find a unique solution to the
linear system. But, if more is known on $x$, then, hopefully, a
smaller number $m$ of measurements may be enough.

In the theory of CS, it is now well-understood that it is indeed
possible to recover sparse signals (signals with a small support, the
support being the set of non-zeros entries) from a small number of
linear measurements. If $x$ is a sparse vector and $A$ a ``good''
measurement matrix (in a sense to be clarified later), then looking
for a vector $y$ with the smallest support and satisfying $Ay = Ax$
can recover $x$ exactly. This procedure, called the $\ell_0$ or
support minimization procedure, is known to be the best theoretical
procedure to recover any $s$-sparse vector $x$ (vectors with a support
size smaller than $s$) from $Ax$ as long as $A$ is injective on the
set of all $s$-sparse vectors. However, this problem is NP-hard, and
alternatives are suitable in practice, in part because the function $x
\mapsto |x|_0$ ($|x|_0$ stands for the cardinality of the support of
$x$) is not convex.

A natural remedy to this problem is convex
relaxation. In~\cite{MR1639094}, the authors propose to minimize the
$\ell_1$-norm as the convex envelope of this non-convex function,
leading to the so-called Basis-Pursuit algorithm (BP). The BP
algorithm minimizes the $\ell_1$ norm on the affine space $x + \ker
A$. Namely, consider, for any $y \in \R^m$:
\begin{equation}
  \label{eq:BP}
  \Delta_1(y) \in \argmin_{t \in \R^N} \Big(|t|_1 : At=y\Big),
\end{equation}
so that $\Delta_1(Ax)$ is a candidate for the reconstruction of $x$
based on $A x$. We say that $x$ is exactly reconstructed by
$\Delta_1$, namely $\Delta_1(Ax) = x$, when $x$ is the unique solution
of the minimization problem~\eqref{eq:BP} when $y=Ax$.

Note that other algorithms have been introduced in the CS
literature. For instance, $\ell_p$-minimization algorithms for $0<p<1$
are considered
in~\cite{MR2421974,MR2503311,MR2647010,chartrand2008iteratively}. Some
greedy algorithms based on the ideas of the Matching Pursuit algorithm
of~\cite{MR1321432,MatchingPursuit} have been used in CS,
see~\cite{MR2502366,MR2496554,MR2446929} for instance.

In the present paper, we consider weighted-$\ell_1$ minimization over
$x+\ker A$. This algorithm was introduced in~\cite{MR2461611}. Since
then, it has drawn a particular attention because it is now
acknowledged, although mainly only empirically observed, that a proper
weighted basis-pursuit algorithm can improve a lot upon basic
basis-pursuit. This is illustrated in Figure~\ref{fig:phase-cs}, and
many other numerical experiments can be found
in~\cite{MR2461611}. However, theoretical explanations of this fact
are still lacking. Some results that go in this direction are given
in~\cite{2009arXiv0901.2912A, 2009arXiv0904.0994X,
  khajehnejadanalyzing}, \cite{chartrand2008iteratively},
\cite{2009arXiv0901.2912A}. But, the results given in these papers are
of a different nature than ours, since they are using a random model
for the unknown vector $x$, such as a vector with i.i.d $N(0, 1)$
non-zero entries, with a distribution support which is uniform
conditionally on the sparsity. In the statement of our results, $x$ is
an arbitrary deterministic sparse vector. In~\cite{MR2588385} an
iteratively reweighted least-squares procedure is studied, as an
approximation of basis-pursuit.

We introduce the weighted algorithm: for any $y \in \R^m$ and any
sequence $w = (w_1, \ldots, w_N)\in\R^N$ of non-negative weights,
\begin{equation}
  \label{eq:general-weighted-algo}
  \Delta_w(y) \in \argmin_{t\in\R^N} \Big(\sum_{i=1}^N
  \frac{|t_i|}{w_i} : At = y \Big).
\end{equation}
We use the convention $t / 0 = \infty$ when $t > 0$ and $0 / 0 = 0$.
Note that, under this convention, the
algorithm~\eqref{eq:general-weighted-algo} is defined according to the
support $I_w$ of $w$ by
\begin{equation}
  \label{eq:values-weight-algo}
  \big( \Delta_w(y) \big)_{I_w^c} = 0 \; \text{ and } \;
  \big( \Delta_w(y) \big)_{I_w} \in \argmin_{t \in \R^{I_w}}
  \Big(\sum_{i\in I_w} \frac{|t_i|}{w_i} : A_{I_w} t = y \Big),
\end{equation}
where if $t \in \R^N$ and $I \subset \{ 1, \ldots, N \}$, we denote by
$t_I$ the vector such that $(t_I)_i = t_i$ if $i \in I$ and $(t_I)_i =
0$ if $i \notin I$. Once again, we say that $x$ is exactly
reconstructed by $\Delta_w$, namely $\Delta_w(Ax) = x$, when $x$ is
the unique solution of the minimization
problem~\eqref{eq:general-weighted-algo} when $y=Ax$. In particular,
this requires that the support of $x$ is included in the support of
$w$.

\subsection{No-loss property}

Note that when the weight vector $w$ is close to $x$, then
$\sum_{i=1}^N |x_i| / w_i$ is close to $|x|_0$. Moreover, for
``reasonable'' matrices $A$, the vector $x$ is the one with the
shortest support in the affine space $x + \ker A$. So, a natural
choice for $w$ in~\eqref{eq:general-weighted-algo} is $w =
|\Delta_1(Ax)|$. We denote this decoder by $\Delta_2$:
\begin{equation}
  \label{eq:weighted-BP-2}
  \Delta_2(y) \in \argmin_{t \in \R^N}
  \Big(\sum_{i=1}^N \frac{|t_i|}{|\Delta_1(y)_i|} : At = y \Big).
\end{equation}
The next Theorem proves that $\Delta_2$ is at least as good as the
Basis Pursuit algorithm $\Delta_1$.
\begin{theorem}
  \label{thm:A}
  Let $x \in \R^N$. If $\Delta_1(Ax)=x$, then $\Delta_2(Ax)=x$.
\end{theorem}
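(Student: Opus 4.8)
The plan is to show that $x$ is the unique solution of the weighted problem \eqref{eq:weighted-BP-2}, using the hypothesis $\Delta_1(Ax)=x$ in two ways: first to identify the weights, and second to transfer the optimality of $x$ from the $\ell_1$ problem to the weighted one. Since $\Delta_1(Ax)=x$, the weights are $w_i=|x_i|$, so the support $I_w$ of $w$ is exactly $I:=\supp(x)$. By the characterization \eqref{eq:values-weight-algo}, $\Delta_2(Ax)$ vanishes off $I$ and on $I$ it minimizes $\sum_{i\in I}|t_i|/|x_i|$ over $\{t\in\R^I : A_I t = Ax\}$. Note $x$ itself is feasible for this restricted problem (since $Ax = A_I x_I = A_I x$), and its objective value is $\sum_{i\in I}|x_i|/|x_i| = |I| = |x|_0$.

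The key step is to argue that no other feasible $t$ beats $x$. Suppose $t\in\R^I$ satisfies $A_I t = Ax$ and $\sum_{i\in I}|t_i|/|x_i| \le |I|$; extend $t$ by zeros to a vector $\tilde t\in\R^N$, so $A\tilde t = Ax$. I would like to compare $|\tilde t|_1$ with $|x|_1 = \sum_{i\in I}|x_i|$. The natural tool is a weighted Cauchy–Schwarz (or the Cauchy–Schwarz/AM–GM) inequality:
\[
|\tilde t|_1 = \sum_{i\in I} |t_i|
= \sum_{i\in I} \sqrt{\frac{|t_i|}{|x_i|}}\cdot\sqrt{|t_i|\,|x_i|}
\le \Big(\sum_{i\in I}\frac{|t_i|}{|x_i|}\Big)^{1/2}\Big(\sum_{i\in I}|t_i|\,|x_i|\Big)^{1/2}.
\]
Hmm, this is not quite the clean bound I want. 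A cleaner route: by the weighted AM–GM inequality, for each $i\in I$, $|t_i| = \frac{|t_i|}{|x_i|}\cdot|x_i| \le \frac12\big(\frac{|t_i|}{|x_i|}\big)^2\!\cdot(\text{?})$ — also awkward. The inequality that actually does the job is simply: by Cauchy–Schwarz applied with weights, $\big(\sum_{i\in I}|t_i|\big)^2 \le \big(\sum_{i\in I}\tfrac{|t_i|}{|x_i|}\big)\big(\sum_{i\in I}|t_i|\,|x_i|\big)$; I expect the correct argument to instead use $\sum_i |t_i| \le \big(\sum_i \tfrac{|t_i|}{|x_i|}\big)^{1/2}\big(\sum_i |t_i|\,|x_i|\big)^{1/2}$ in tandem with a second application bounding $\sum_i|t_i||x_i|$, or — more likely — to bound $|x|_1 = \sum_i |x_i|$ below by a product and combine. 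The main obstacle is precisely finding the right elementary inequality so that the constraint $\sum_{i\in I}|t_i|/|x_i|\le|I|$ forces $|\tilde t|_1\le|x|_1$, with equality only when $t=x$.

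Once such an inequality is in place, the argument closes quickly: if $\sum_{i\in I}|t_i|/|x_i| < |I|$ were possible, it would yield $|\tilde t|_1 < |x|_1$ (after checking that one cannot have $|\tilde t|_1=|x|_1$ with strict inequality upstream), contradicting $\Delta_1(Ax)=x$, which says $x$ is the \emph{unique} $\ell_1$-minimizer over $x+\ker A$. Hence the minimal value of the restricted weighted problem is exactly $|I|$, attained at $x$. For uniqueness of the weighted minimizer: any minimizer $t$ achieves $\sum_{i\in I}|t_i|/|x_i|=|I|$, and the equality case of the chosen inequality (together with uniqueness in the $\ell_1$ problem) should pin down $t=x$; concretely, equality in weighted AM–GM/Cauchy–Schwarz typically forces $|t_i|=|x_i|$ for all $i$, and then $A_I t = A_I x$ with the $\ell_1$-uniqueness (applied to $\tilde t$, which is then an $\ell_1$-minimizer) gives $\tilde t = x$, i.e. $\Delta_2(Ax)=x$. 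The only delicate point is handling signs and the degenerate coordinates, but since $|x_i|>0$ on $I$ and $t$ is supported in $I$, this is routine.
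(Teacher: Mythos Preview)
Your ``key step'' cannot be completed: the elementary inequality you are searching for is simply false. Take $I=\{1,2\}$, $x_1=1$, $x_2=0.01$, and $t=(1.9,0)$. Then $\sum_{i\in I}|t_i|/|x_i|=1.9<2=|I|$, yet $|\tilde t|_1=1.9>1.01=|x|_1$. So no Cauchy--Schwarz or AM--GM manipulation will give the implication ``$\sum_{i\in I}|t_i|/|x_i|\le |I|\Rightarrow |\tilde t|_1\le |x|_1$'', because the implication fails already for unconstrained $t$. Your argument therefore has a genuine gap at exactly the point you flagged as the main obstacle.

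The paper's proof avoids optimization altogether by observing that the constraint set is a singleton. From $\Delta_1(Ax)=x$ one deduces that $A_I$ is \emph{injective}: if $0\neq h\in\ker A_I$, extend it by zeros to $h^0\in\ker A$; the null space characterization of $\ell_1$-uniqueness (your hypothesis) gives $\inr{\sgn(x_I),\lambda h}>0$ for every $\lambda\neq 0$, which is impossible. Once $A_I$ is injective, the feasible set $\{t\in\R^I:A_It=Ax\}$ equals $\{x_I\}$, so the weighted minimization has $x_I$ as its only candidate and $\Delta_2(Ax)=x$ follows immediately. You were in fact one step away from this at the very end of your proposal: the observation ``$A_It=A_Ix$ together with $\ell_1$-uniqueness gives $\tilde t=x$'' is essentially the injectivity argument, but you reached for it only after an equality-case analysis that cannot be set up. Use it at the start instead.
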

The proof of Theorem~\ref{thm:A} is based on the well-known null space
property and dual characterization of~\cite{MR2236170}, see
Section~\ref{sec:proofs} below. However, it was observed empirically
in~\cite{MR2461611} that it is better to consider positive weights,
and thus, to consider, for some $\eps > 0$, the weights $w_i =
|\Delta_1(y)_i| + \eps$ for $i= 1, \ldots, N$. This is easily
understood: if for some $i\in\{1,\ldots,N\}$, $\Delta_1(Ax)_i=0$ while
$x_i\neq 0$, then $\Delta_2(Ax)_i$ is also equal to $0$ and there is
no hope to recover $x$ using $\Delta_2$ as well. By adding an extra
$\eps$ term to each weights, the necessary support condition $\supp(x)
\subset \supp(w)$ to reconstruct $x$ from $\Delta_w(Ax)$ is satisfied
(see for instance
Proposition~\ref{prop:equivalence-reconstruction-exacte} in
Section~\ref{sec:proofs}). The choice of $\eps > 0$ can be done in a
data-driven way, see~\cite{MR2461611}.

\subsection{An empirical evidence}
\label{sec:empirical-evidence}

In Figure~\ref{fig:phase-cs}, we give a simple illustration of the
fact that weighted basis-pursuit can improve a lot upon basic
basis-pursuit, using a simple numerical experiment. For many
combinations of $m$ ($y$-axis) and $s$ ($x$-axis), we repeat the
following experiment 50 times: draw at random a sensing matrix $A$
with i.i.d $N(0, 1/m)$ entries and draw at random a vector with $s$
non-zero coordinates chosen uniformly, with i.i.d $N(0, 1)$ non-zero
entries. Then, compute $\hat x_1 = \Delta_1(Ax)$ and $\hat x_w =
\Delta_{20}^\eps(Ax)$ (here we take $\eps = 0.01$ without further
investigation), where $\Delta_k^\eps(Ax)$ is computed iteratively,
using
\begin{equation}
  \label{eq:delta-k-def}
  \Delta_{k+1}^\eps(Ax) \in \argmin_{t \in \R^N}
  \Big(\sum_{i=1}^N \frac{|t_i|}{|\Delta_k^\eps(Ax)_i| + \eps} : At = Ax \Big).
\end{equation}
Then, we count the number of exact reconstructions achieved by $\hat x_1$
and $\hat x_w$ over the 50 repetitions. The plots on the left are the
exact recovery counts of $\hat x_1$ (black means exact recovery over
the 50 repetitions) while the plots on the right are the exact
recovery counts of $\hat x_w$. In these figures, exact recovery is
declared exact when $|\hat x - x|_2 / |x|_2 < \eta$, where we take
$\eta = 10^{-5}$ on the first line and $\eta = 10^{-6}$ on the second
line. The red curve is a theoretical ``phase-transition'' threshold $s
\mapsto s \log(e m / s)$. We observe in these figures that $\hat x_w$
improves a lot upon $\hat x_1$, in particular when $\eta = 10^{-6}$.

\newlength{\figwidth}
\newlength{\figlength}
\setlength{\figwidth}{7cm}

\begin{figure}[htbp]
  \centering
  \includegraphics[width=\figwidth]{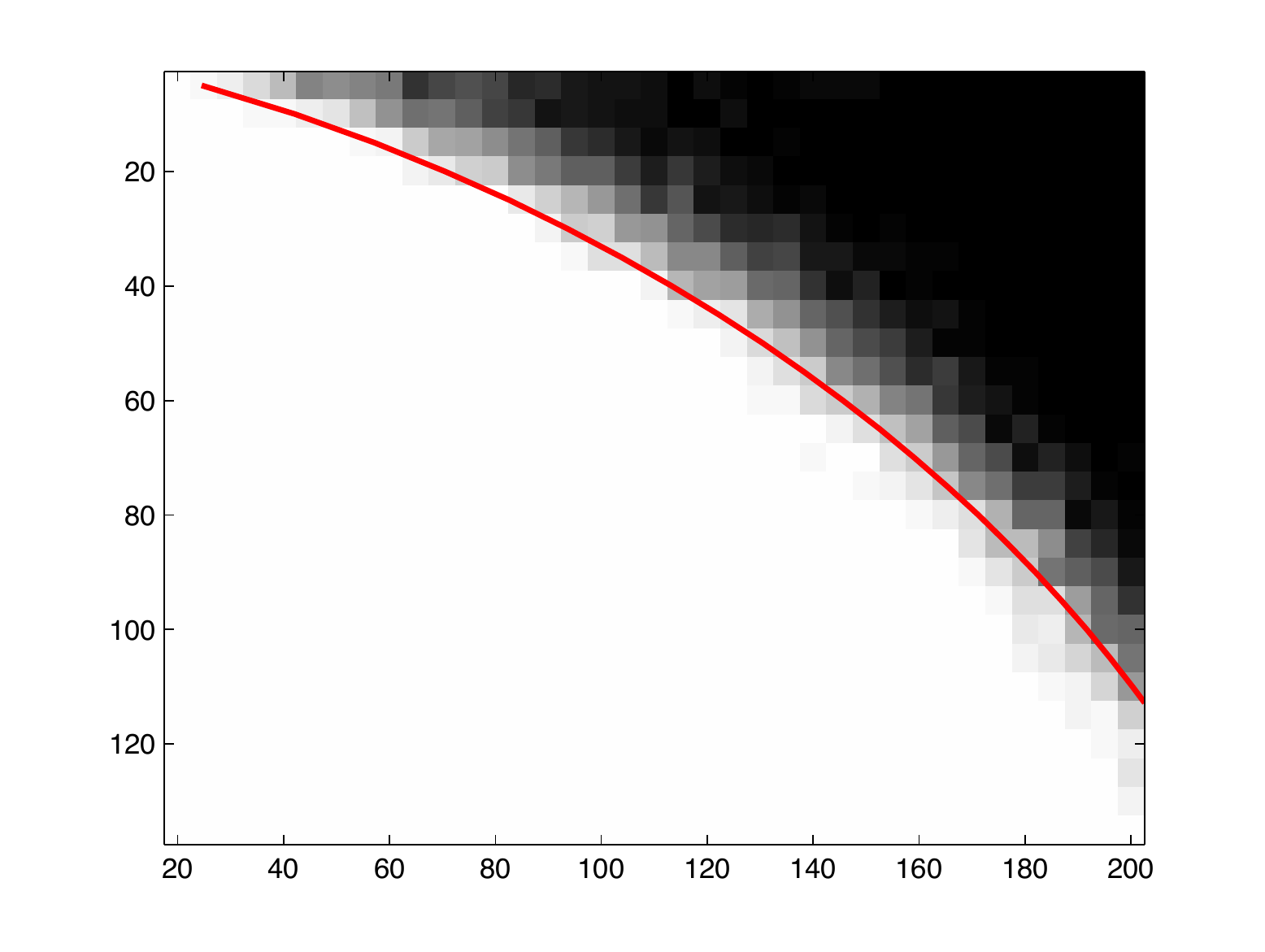}%
  \includegraphics[width=\figwidth]{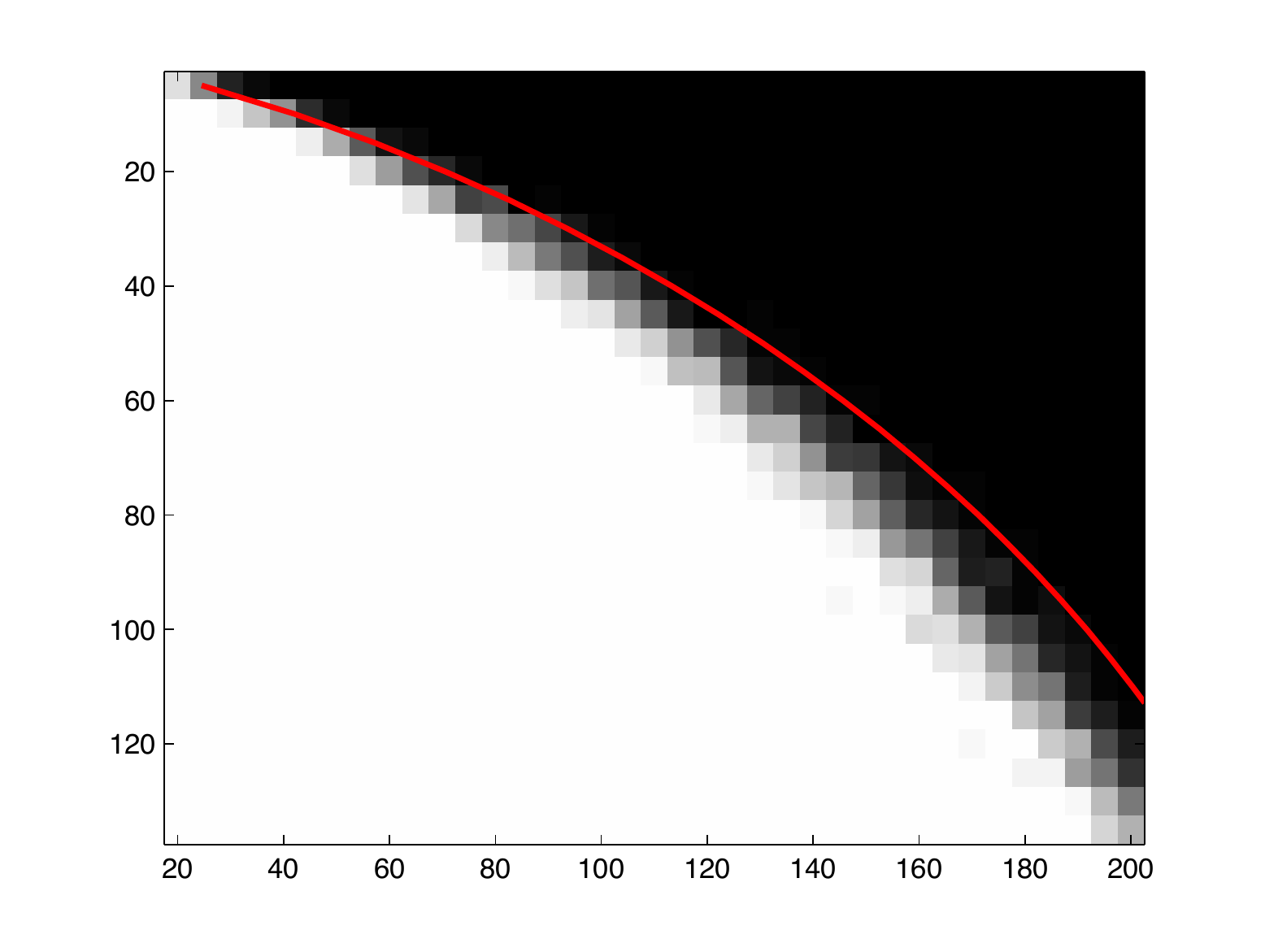} \\
  \includegraphics[width=\figwidth]{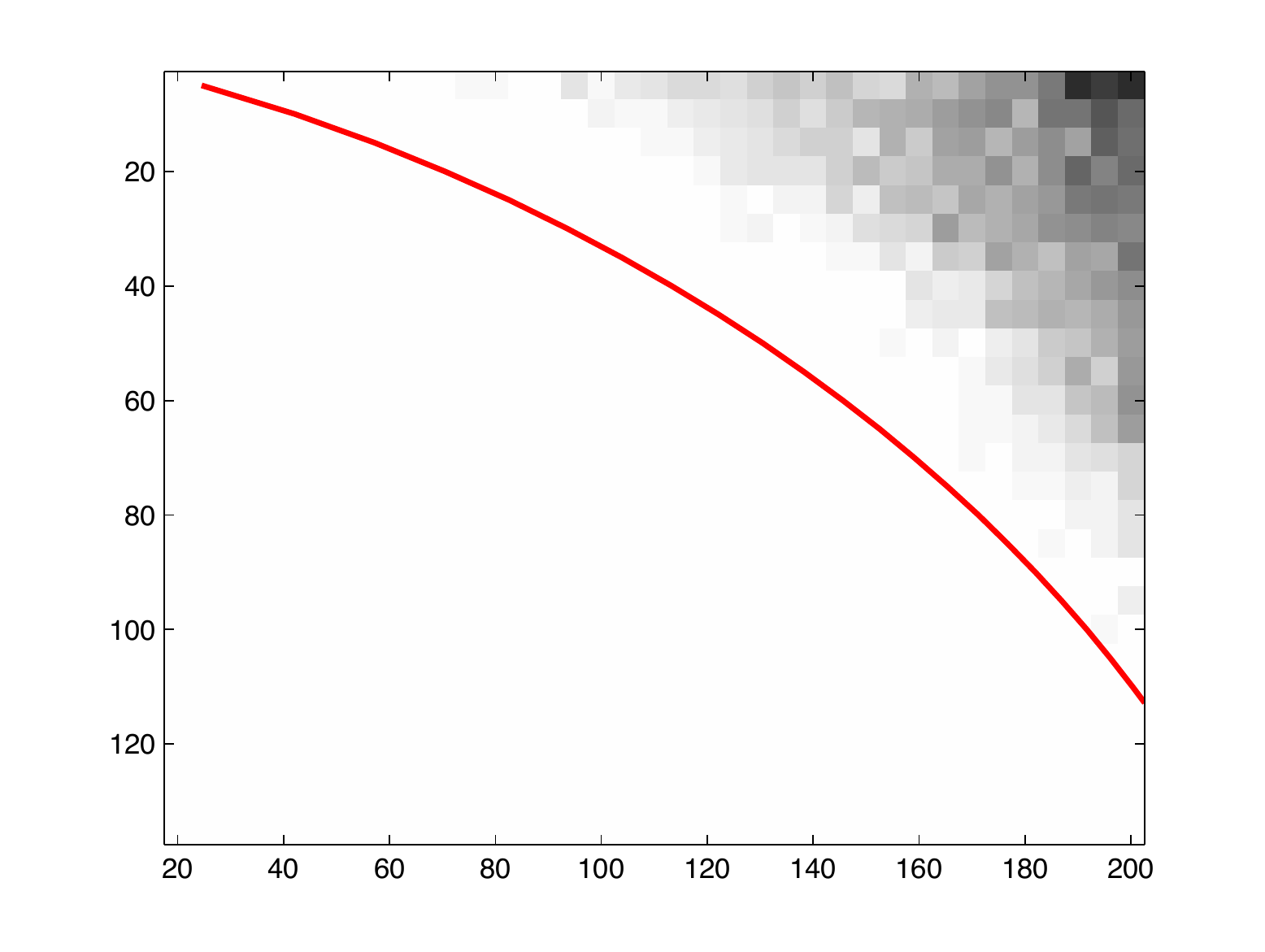}%
  \includegraphics[width=\figwidth]{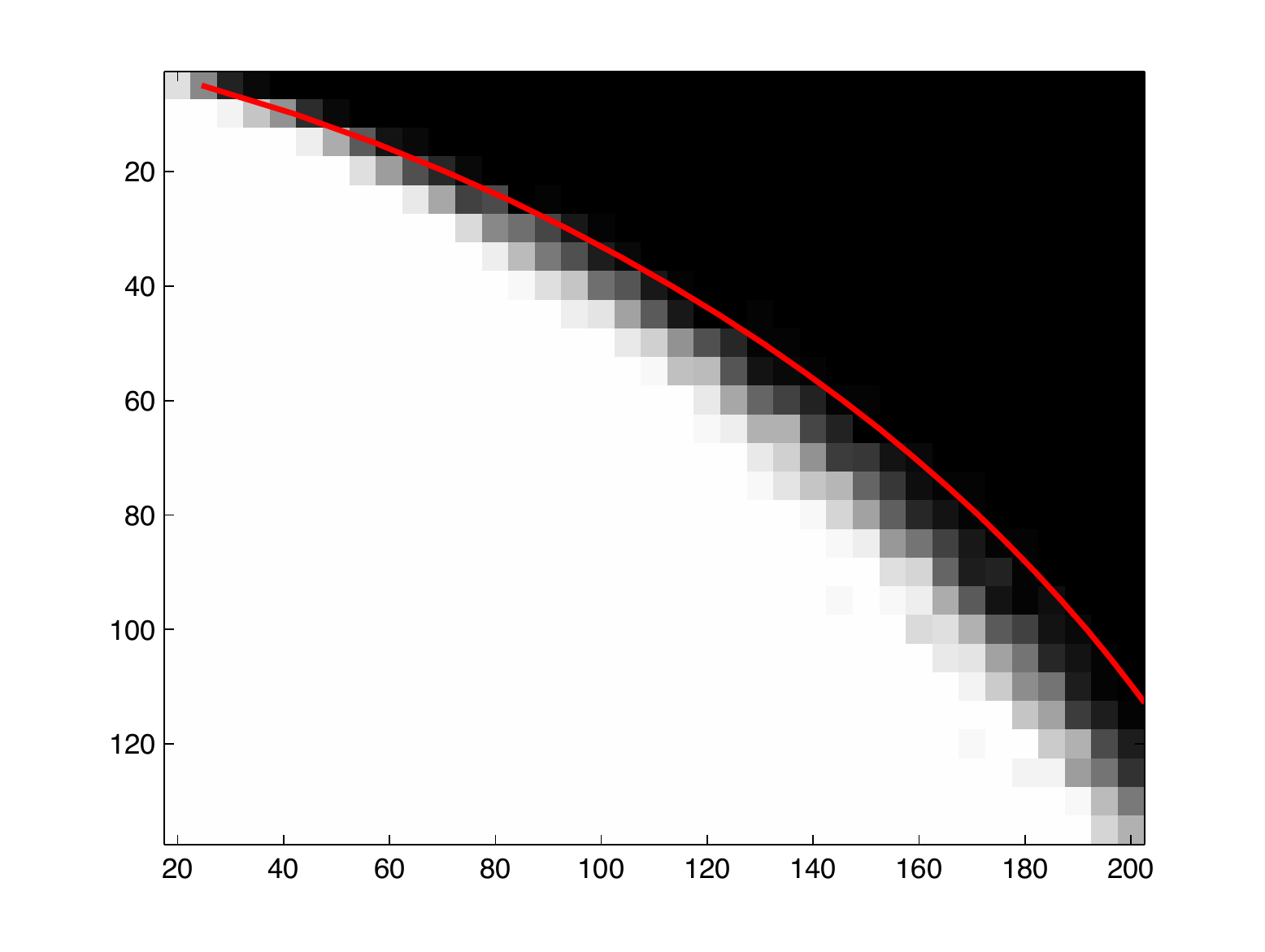}%
  \caption{Exact recovery counts (black means exact recovery) of
    basis-pursuit (left column) and weighted basis-pursuit (right
    column), where the $x$-axis is the sparsity ($s$) and the $y$-axis
    is the number of measurements ($m$). Exact recovery is declared
    with a tolerance equal to $10^{-5}$ on the first line, and equal
    to $10^{-6}$ on the second line. The red curve is a theoretical
    phase-transition threshold $s \mapsto s \log(e m / s)$}
  \label{fig:phase-cs}
\end{figure}

\subsection{A theoretical explanation}

Now, we want to understand if $\Delta_2$ can do better than
$\Delta_1$, and why. In particular, if $\Delta_1(Ax)$ is close to $x$
(but fails to reconstruct exactly $x$), under which condition do we
get $\Delta_2(Ax)=x$? In general, given a weight vector $w\in\R^N$,
what conditions on $w$ can insure that $\Delta_w(Ax)=x$? In
Theorem~\ref{thm:B} below, we use the duality argument
of~\cite{MR2236170} to prove that the condition
\begin{equation}
  \label{eq:A0}
  (A0)(I,C) \hspace{1cm} |w_{I^c}|_\infty\big|(1/w)_I\big|_2\leq C,
\end{equation} 
where $I$ is the support of $x$ and $C \geq 0$ is such that
\begin{equation*}
   C \leq \frac{1 - \delta}{\mu},
\end{equation*}
where $\delta$ and $\mu$ are, respectively, the restricted isometry
and incoherency constants~\cite{MR2300700,MR2236170,MR2243152} of the
matrix $A$, ensure  that the $w$-weighted algorithm $\Delta_w$
recovers exactly $x$ given
$Ax$.

It is interesting to note that, so far, only random matrices are able
to satisfy the incoherency and isometry properties for small values of
$m$. Thus, if one wants the number $m$ of measurements to be of the
order (up to some logarithmic factor) of the sparsity of the vector to
recover, one has to consider random matrices. This leads to results in
Compressed Sensing that hold with a large probability, with respect to
the randomness involved in the construction of the sensing matrix. In
practice, however, the most interesting sensing matrices are
structured matrices, like the Fourier or the Walsh matrices
(see~\cite{MR2300700,MR2417886}), since these matrices can be stored
and constructed by efficient algorithms. A lot of research go in this
direction, and we don't consider this problem here, but rather focus
on weighted algorithms. Therefore, we will state our probabilistic
results for a simple (and somehow universal) sensing matrix $A$ with
entries being i.i.d. centered Gaussian variables with variance $1/m$.
\begin{theorem}
  \label{thm:B}
  Let $x\in\R^N$ and denote by $I$ its support and by $s$ the
  cardinality of $I$. Let $C, \mu>0$ and $0 < \delta < 1$. Assume
  that
  \begin{equation*}
    m \geq c_0 \max\Big[\frac{s}{\delta^2}, \frac{s \log  N}{
      \mu^2}\Big] \;\mbox{ and }\;  C \leq \frac{1 - \delta}{\mu},
  \end{equation*}
  where $c_0$ is a purely numerical constant.  Consider the event
  $\Omega(I, C) = \{ |w_{I^c}|_\infty \big| (1/w)_I \big|_2 \leq C \}$
  and let $A$ be a $m\times N$ matrix with entries being
  i.i.d. centered Gaussian random variables with variance $1/m$. Then,
  with probability larger than
  \begin{equation*}
    1 - 2 \exp(-c_1 m \delta^2) - \exp\big(-c_2 \mu^2 m /s \big) - \P
    \Big[ \Omega(I, C)^\complement \Big],
  \end{equation*}
  the vector $x$ is exactly reconstructed by $\Delta_w(Ax)$.
\end{theorem}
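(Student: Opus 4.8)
The idea is to reduce Theorem~\ref{thm:B} to a deterministic exact‑recovery criterion for $\Delta_w$, and then to verify that a Gaussian matrix meets the hypotheses of that criterion with the announced probability.

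\textbf{Step 1 (deterministic criterion).} I would first prove that $\Delta_w(Ax)=x$ whenever $A$ satisfies, for the fixed support $I$ of $x$, the two bounds
\[
  \bigl\| A_I^* A_I - \mathrm{Id} \bigr\|_{2\to2} \le \delta
  \qquad\text{and}\qquad
  \max_{i\notin I}\bigl| A_I^* a_i \bigr|_2 < \mu ,
\]
where $a_i$ denotes the $i$-th column of $A$, together with $(A0)(I,C)$ and $C\le(1-\delta)/\mu$. The first bound makes $A_I$ injective and gives $\|(A_I^*A_I)^{-1}\|_{2\to2}\le(1-\delta)^{-1}$. Following the duality argument of~\cite{MR2236170}, I set $z=A_I(A_I^*A_I)^{-1}u$, where $u\in\R^N$ is supported on $I$ with $u_i=\sgn(x_i)/w_i$. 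Then $(A^*z)_i=\sgn(x_i)/w_i$ for $i\in I$, while for $i\notin I$, Cauchy--Schwarz, the bound on $\|(A_I^*A_I)^{-1}\|_{2\to2}$, the incoherence bound and $(A0)(I,C)$ yield
\[
  |(A^*z)_i| \;<\; \frac{\mu}{1-\delta}\,\big|(1/w)_I\big|_2
  \;\le\; \frac{\mu C}{(1-\delta)\,|w_{I^c}|_\infty} \;\le\; \frac1{w_i}.
\]
To upgrade this to uniqueness one argues in the standard way: any $h\in\ker A\setminus\{0\}$ has $h_{I^c}\ne0$ (injectivity of $A_I$), and expanding $\sum_i|x_i+h_i|/w_i$, using $\langle z,Ah\rangle=0$ and the \emph{strict} inequalities $|(A^*z)_i|<1/w_i$ off $I$, gives $\sum_i|x_i+h_i|/w_i>\sum_i|x_i|/w_i$; hence $x$ is the unique minimizer of~\eqref{eq:general-weighted-algo} with $y=Ax$.

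\textbf{Step 2 (probabilistic checks).} It then remains to control, for $A$ with i.i.d.\ $N(0,1/m)$ entries, the probabilities of the two events appearing in Step~1. The matrix $A_I$ is $m\times s$ with i.i.d.\ $N(0,1/m)$ entries, so the classical nonasymptotic estimates on its extreme singular values $\smin(A_I),\smax(A_I)$ (a consequence of Gaussian concentration of Lipschitz functions) show that $\|A_I^*A_I-\mathrm{Id}\|_{2\to2}\le\delta$ fails with probability at most $2\exp(-c_1 m\delta^2)$ once $m\ge c_0 s/\delta^2$; note that only a restricted‑isometry bound on the \emph{fixed} set $I$ is used, which is why no $\log N$ factor enters here. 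For the incoherence event, fix $i\notin I$: conditionally on $a_i$, the vector $A_I^*a_i$ is centered Gaussian in $\R^s$ with covariance $(|a_i|_2^2/m)\,\mathrm{Id}$, so $|A_I^*a_i|_2$ concentrates around $|a_i|_2\sqrt{s/m}\approx\sqrt{s/m}$; a $\chi^2$ (or Gaussian‑Lipschitz) tail bound gives $\P[\,|A_I^*a_i|_2\ge\mu\,]\le\exp(-c\,\mu^2m/s)$, and a union bound over the at most $N$ indices $i\notin I$, absorbed by the hypothesis $m\ge c_0 s\log N/\mu^2$, yields $\max_{i\notin I}|A_I^*a_i|_2<\mu$ off an event of probability at most $\exp(-c_2\mu^2m/s)$.

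\textbf{Conclusion and main obstacle.} On the intersection of the isometry event, the incoherence event and $\Omega(I,C)$, Step~1 gives $\Delta_w(Ax)=x$; a union bound over the three complements produces exactly the claimed lower bound on the probability (no independence between $w$ and $A$ is needed, since Step~1 is a deterministic implication). The step I expect to be most delicate is the deterministic criterion: one must ensure that the dual‑certificate bound and the weight bound $(A0)(I,C)$ multiply to something \emph{strictly} below $1/w_i$ at precisely the threshold $C\le(1-\delta)/\mu$, which forces careful bookkeeping of which operator and Euclidean norms play the roles of the constants $\delta$ and $\mu$ and of the strictness needed for uniqueness of the minimizer; the Gaussian concentration estimates in Step~2, while requiring some care with the exponents, are otherwise routine.
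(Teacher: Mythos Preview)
Your proposal is correct and follows essentially the same route as the paper: the identical dual certificate $Y^0=A^\top A_I(A_I^\top A_I)^{-1}(\sgn(x)/w)_I$, the same pair of deterministic events (restricted isometry on the fixed set $I$ and incoherence $\max_{i\notin I}|A_I^\top a_i|_2<\mu$), and the same multiplication of $(A0)(I,C)$ with $C\le(1-\delta)/\mu$ to force $|(wY^0)_{I^c}|_\infty<1$. The only differences are that the paper packages the ``certificate $\Rightarrow$ uniqueness'' step into a separate proposition rather than inlining the null-space argument as you do, and that for the Gaussian estimates it uses an $\eps$-net plus Bernstein for the isometry event and a Markov/vectorial-Khintchine bound for incoherence, whereas you use Gaussian--Lipschitz concentration on singular values and a conditional $\chi^2$ tail---standard, interchangeable variants.
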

Theorem~\ref{thm:B} gives an explicit condition, linking the
incoherency constant $\mu$, the restricted isometry constant $\delta$,
and the constant $C$ from condition $A0(I, C)$ on the weights $w$ that
ensures the exact reconstruction of $x$ using $\Delta_w$. This is the
first result of this nature for weighted basis pursuit.

When $w_{I^c}=0$ then $(A0)(I,C)$ holds with $C = 0$, so that one can
take $\delta = 1$ and $\mu = +\infty$. This is the case for $w =
(|\Delta_1(Ax)_i|)_{i=1}^N$ when $\Delta_1(Ax)=x$. This condition is
also satisfied when the weights vector $w$ is close enough to $|x|$
and when the absolute value of the non-zero coordinates of $|x|$ are
sufficiently large. For instance, $(A0)(I,C)$ holds when
\begin{equation}
  \label{eq:exemple-approx-weights}
  \min_{i\in I} |x_i| \geq \Big(1 + \frac{\sqrt{|I|}}{C}\Big) |w -
  |x||_\infty.
\end{equation}
Indeed, if we denote $\eps = |w - |x||_\infty$ then $(A0)(I,C)$
follows from~\eqref{eq:exemple-approx-weights} since $\max_{i\in
  I^c}w_i\leq \eps$ and
\begin{equation*}
  \Big| \Big(\frac{1}{w}\Big)_I \Big| \leq
  \frac{\sqrt{|I|}}{\min_{i\in I} 
    w_i}\leq \frac{\sqrt{|I|}}{\min_{i\in I}|x_i| - \eps}.
\end{equation*} 
In particular, if $A0(I, C)$ is satisfied with $C = c_1 / \sqrt{\log
  N}$, for some constant $0 < c_1 < 1$, then a proportional to $s$
number of Gaussian measurements will be enough to get $\Delta_w(Ax) =
x$ with a large probability.

In Figure~\ref{fig:A0-verif} below, we give an empirical illustration
of the fact that $A0(I, C)$ is indeed a relevant condition for exact
reconstruction of weighted basis-pursuit. We consider exactly the same
experiment as what we did in Section~\ref{sec:empirical-evidence}, but
this time we fix the number of measurements to $m = 110$ and the
sparsity of $x$ to $s = 45$. For this combination of $m$ and $s$, the
phase transition occurs, namely basis pursuit can either work or not,
see Figure~\ref{fig:phase-cs}, so we can expect for these values a
strong improvement of weighted basis-pursuit over non-weighted one. On
the left-side of Figure~\ref{fig:A0-verif}, we show the value of the
constant $C$ over the reweighting iterations. Namely, if $I$ is the
support of the true unknown vector $x$, we compute for $k = 1, \ldots,
K$ the values of
\begin{equation*}
  C^{k} = |w_{I^c}^{(k)}|_\infty\big|(1 / w^{(k)})_I\big|_2,
\end{equation*}
where
\begin{equation*}
  w^{(k)} = |\Delta_k^\eps(Ax)| + \eps
\end{equation*}
over the 10 repetitions (differentiated by different colors), where we
recall that $\Delta_k^\eps(Ax)$ is given by~\eqref{eq:delta-k-def} and
where we choose $K = 30$. On the right-side of
Figure~\ref{fig:A0-verif}, we show the logarithm of relative
reconstruction errors over the iterations, namely
\begin{equation*}
  \mathrm{err_k} = \log \Big( \frac{|\Delta_k^\eps(Ax) - x|_2}{|x|_2}
  \Big)
\end{equation*}
(we take the logarithm only for illustrational purpose, so that we can
see the cases when exact reconstructions occurs). Each repetition of
the experiment is represented with a different color. 

What we observe is a direct correspondence between the constant $C$
from Assumption $A0(I, C)$ and the quality of reconstruction of
weighted basis pursuit along the iterations. This tells that
Assumption $A0(I, C)$ indeed explains (at least in the considered
configuration) when exact reconstruction can or cannot happen using
weighted basis pursuit.

\setlength{\figlength}{6cm}%
\setlength{\figwidth}{7.1cm}

\begin{figure}[htbp]
  \centering
  \includegraphics[width=\figwidth,height=\figlength]{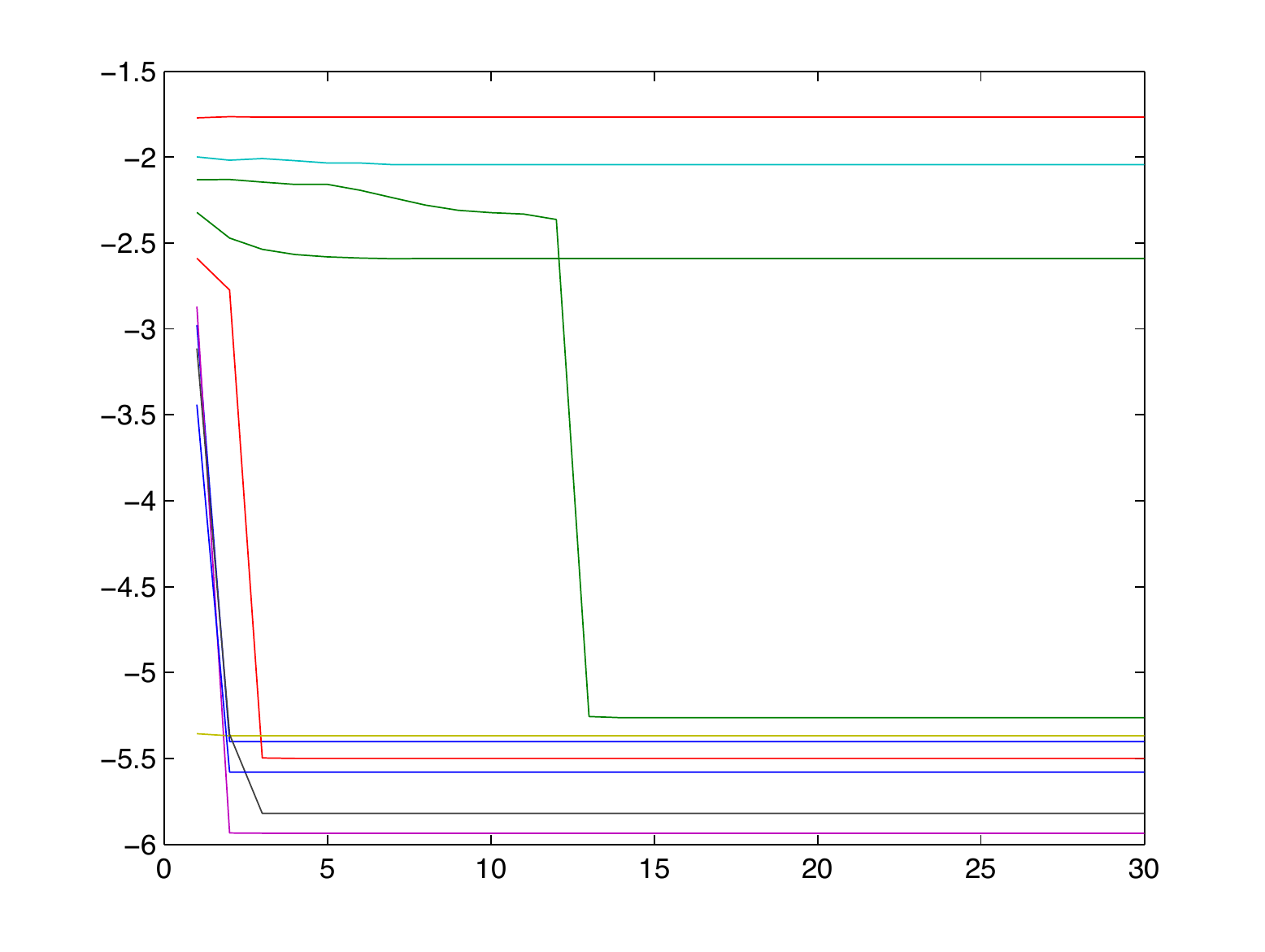}
  \hspace{-0.9cm}
  \includegraphics[width=\figwidth,height=\figlength]{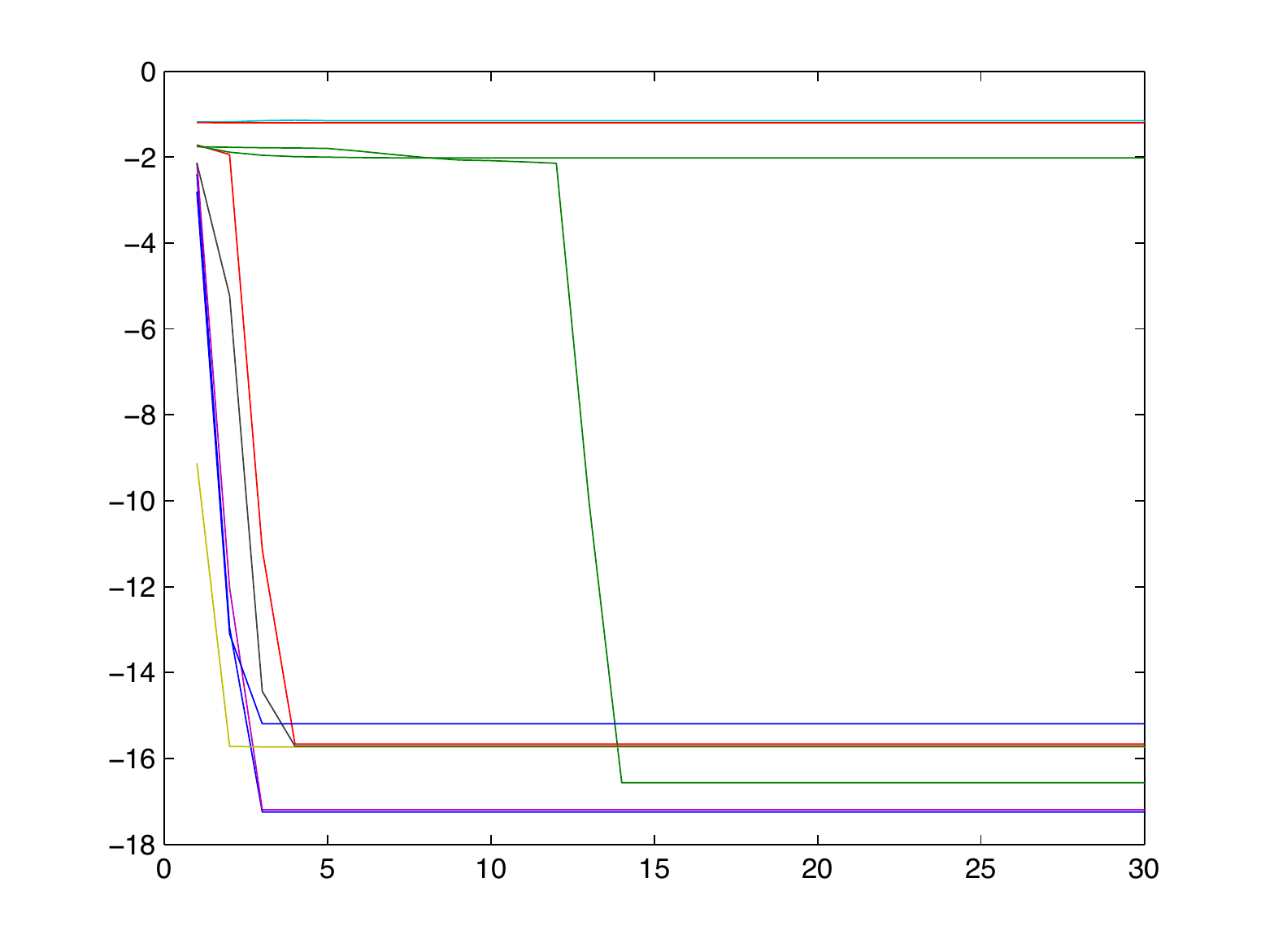}

  \caption{Logarithm of the value of the constant $C$ from
    Assumption~$A0(I, C)$ (left) and logarithm of the relative
    reconstruction error of weighted basis pursuit over the iterations
    (right).}
  \label{fig:A0-verif}
\end{figure}

\begin{remark}
  Note that uniform results can also be derived for the
  weighted-$\ell_1$ algorithm. Indeed, by using classical machinery,
  it can be proved that 1) implies 2) implies 3) where:
  \begin{enumerate}
  \item for all $x \in \Sigma_s$, $A \diag(w)$ satisfies
    $\mathrm{RIP}(\delta,8s)$ and $I_x\subset I_{w}$,
  \item $\sup_{x \in \ker(A \diag(w)) \cap
      B_1^N}\Norm{x}_2<\frac{1}{2\sqrt{s}}$ and $\forall x \in
    \Sigma_s, I_x\subset I_{w}$,
  \item for any $x\in \Sigma_s, \Delta_{w}(Ax)=x$.
  \end{enumerate}
  But, it is not clear why, for instance when $w = \Delta_1(Ax)$, it
  would be easier for the matrix $A \diag(\Delta_1(Ax))$ to satisfy
  $\mathrm{RIP}$ than for $A$ itself. The same remark also holds for
  the euclidean section of $B_1^N$ by the kernel of $A
  \diag(\Delta_1(Ax))$ or $A$. These approaches look too crude to
  perform a study of $\ell_1$-weighted algorithms, where most of the
  gain can be done only on the absolute multiplying constant in front
  of the minimal number of measurements $m$ needed for exact
  reconstruction.
\end{remark}

\subsection{Verifying exact reconstruction}

Thanks to Theorem~\ref{thm:A}, it is easy to test if we were able to
reconstruct exactly a vector $x$ given $Ax$. So far, we have to rely
on the theory to insure that with a high probability, we have
$\Delta_1(Ax) = x$. Using~\eqref{eq:weighted-BP-2}, we can verify this
belief. Indeed, Theorem~\ref{thm:A} entails that $\Delta_2(Ax) = x$
when $\Delta_1(Ax) = x$. In particular, if $\Delta_1(Ax)\neq
\Delta_2(Ax)$, then we are sure that we didn't perform the exact
reconstruction of $x$ using $\Delta_1(Ax)$. Then, we can iterate the
mechanism and define for any $k \geq 1$
\begin{equation*}
  \Delta_{k+1}(Ax) \in \argmin_{t \in \R^N}
  \Big(\sum_{i=1}^N \frac{|t_i|}{|\Delta_k(Ax)_i|} : At = Ax \Big),
\end{equation*}
leading to a sequence
\begin{equation}
  \label{eq:sequence-BP}
  \Delta_1(Ax), \Delta_2(Ax), \cdots, \Delta_r(Ax).
\end{equation}
If the sequence~\eqref{eq:sequence-BP} does not become constant after
a certain number of iterations, then it is very likely that none of
the algorithm $\Delta_k(Ax)$ reconstructed exactly $x$. We also have
the following reverse statement. Denote by $\Sigma_k$ the set of all
$k$-sparse vectors in $\R^N$.
\begin{theorem}
  \label{thm:C}
  Let $A$ be a $m \times N$ injective matrix on $\Sigma_m$ and let $x
  \in \Sigma_{\lfloor m / 2 \rfloor}$. The following statements are
  equivalent:
  \begin{enumerate}
  \item There exists an integer $r$ such that $\Delta_r(Ax)=x$,
  \item The sequence $\Delta_1(Ax), \Delta_2(Ax),\ldots,$ becomes
    constantly equal to a $\lfloor m/2 \rfloor$-sparse vector after a
    certain number of iterations.
  \end{enumerate}  
\end{theorem}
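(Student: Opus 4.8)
The plan is to prove the two implications separately, with $(2)\Rightarrow(1)$ being essentially immediate and $(1)\Rightarrow(2)$ requiring the real work. First observe that since $A$ is injective on $\Sigma_m$, it is in particular injective on $\Sigma_{\lfloor m/2\rfloor}$, and moreover any two $\lfloor m/2\rfloor$-sparse vectors $u,v$ with $Au=Av$ must coincide (since $u-v\in\Sigma_m\cap\ker A=\{0\}$). This gives a key rigidity fact: each decoder $\Delta_k(Ax)$, being a solution of a weighted-$\ell_1$ problem on the affine space $x+\ker A$, produces a vector in that affine space; if it happens to be $\lfloor m/2\rfloor$-sparse it must equal $x$ by injectivity. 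For $(2)\Rightarrow(1)$, suppose the sequence becomes constantly equal to some $\lfloor m/2\rfloor$-sparse vector $z$ after iteration $r_0$. Then $z=\Delta_{r_0}(Ax)\in x+\ker A$ is $\lfloor m/2\rfloor$-sparse, and $x\in\Sigma_{\lfloor m/2\rfloor}$, so $z=x$ by the rigidity fact above; thus $\Delta_{r_0}(Ax)=x$ and we take $r=r_0$.

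For $(1)\Rightarrow(2)$, assume $\Delta_r(Ax)=x$ for some $r$. I would use Theorem~\ref{thm:A} (or rather its iterative generalization): the argument that proves $\Delta_1(Ax)=x \Rightarrow \Delta_2(Ax)=x$ is purely about the null space / dual characterization and the fact that the new weight vector is $|\Delta_1(Ax)|=|x|$; the same argument shows that whenever $\Delta_k(Ax)=x$, we also have $\Delta_{k+1}(Ax)=x$, because the weight used at step $k+1$ is $w=|\Delta_k(Ax)|=|x|$, which has support exactly $I=\supp(x)$, so $\Delta_{k+1}$ solves a weighted problem with $w_{I^c}=0$, i.e. condition $A0(I,0)$ holds, forcing exact recovery. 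Hence once the sequence hits $x$ it stays at $x$: the sequence is constantly equal to $x$ for all indices $\geq r$. Since $x\in\Sigma_{\lfloor m/2\rfloor}$, this is exactly the conclusion of statement (2).

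The main obstacle — and the point requiring care — is making precise the claim that $\Delta_k(Ax)=x$ implies $\Delta_{k+1}(Ax)=x$, i.e. checking that the hypotheses of the no-loss argument behind Theorem~\ref{thm:A} are genuinely satisfied with the weight $w=|x|$. One must verify: (a) the minimization problem~\eqref{eq:general-weighted-algo} with $w=|x|$ has $x$ as a feasible point (clear, since $Ax=Ax$ and $\supp(x)=\supp(w)$); (b) $x$ is in fact the \emph{unique} minimizer. For (b) one invokes the null-space property: if $\Delta_1(Ax)=x$ then $x$ is the unique $\ell_1$-minimizer on $x+\ker A$, and a standard convexity/duality manipulation (exactly the one used for Theorem~\ref{thm:A}) transfers this to uniqueness of the $|x|$-weighted minimizer. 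A subtle point is the convention $0/0=0$ and $t/0=\infty$ for $t>0$: with $w=|x|$ any feasible competitor $t$ with finite objective must satisfy $t_{I^c}=0$, so the problem effectively restricts to $\R^I$, and on that subspace $A$ is injective (as $I\subset\Sigma_m$), which again pins down $t=x$. Once this single step is nailed down, induction closes the implication and the theorem follows.
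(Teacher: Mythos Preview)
Your proposal is correct and, once the detour through Theorem~\ref{thm:A}'s null-space/duality machinery is stripped away, it is exactly the paper's argument: for $(1)\Rightarrow(2)$ the paper simply observes that if $\Delta_r(Ax)=x$ then $y=\Delta_{r+1}(Ax)$ has $\supp(y)\subset\supp(x)$ and $Ay=Ax$, hence $x-y\in\Sigma_m\cap\ker A=\{0\}$ --- precisely your ``subtle point'' about the $0/0$ convention and injectivity of $A_I$ --- and $(2)\Rightarrow(1)$ is your rigidity fact verbatim. The invocation of condition $A0(I,0)$ and of the transfer argument from Theorem~\ref{thm:A} (which as stated would require $\Delta_1(Ax)=x$, not merely $\Delta_r(Ax)=x$) is superfluous and can be dropped.
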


Note that the matrix with i.i.d. standard Gaussian entries is
injective on $\Sigma_m$ with probability one. Thus, we propose to
compute the sequence~\eqref{eq:sequence-BP} as an empirical test for
the exact reconstruction of a vector $x$ from $Ax$.

\section{Iteratively weighted soft-thresholding for matrix completion}
\label{cs-matrices}

In many applications, data can be represented as a database with
missing entries. The problem is then to fill the missing values of the
database, leading to the so-called~\emph{matrix completion}
problem. For instance, collaborative filtering aims at doing automatic
predictions of the taste of users, using the collected tastes of every
users at the same time~\cite{goldberg1992using}. The popular Netflix
prize is a popular application of this
problem\footnote{\texttt{http://www.netflixprize.com/}}. Other
applications include machine-learning~\cite{abernethy2006low},
control~\cite{554402}, quantum state
tomography~\cite{gross2010quantum}, structure from
motion~\cite{tomasi1992shape}, among many others. This problem can be
understood as a non-commutative extension of the compressed sensing
problem. So, a natural question is the following: \emph{Does the
  principle of iterative weighting of the $\ell_1$-norm work also for
  matrix completion?} In this Section, we prove empirically that the
answer to this question is yes. We prove that one can improve the
convex relaxation principle for matrices, which is based on the
nuclear norm \cite{MR2723472},~\cite{Gross}, by using a weighted
nuclear norm, in the same way as we did for vectors in
Section~\ref{sec:cs-vectors}. However, note that there is, as
explained below, a major difference between the vectors and matrices
cases at this point, since a weighted nuclear norm is not convex in
general, while a weighted $\ell_1$-norm is.

Let us first recall standard definitions and notations. Let $A_0 \in
\R^{n_1 \times n_2}$ be a matrix with $n_1$ rows and $n_2$
columns. The matrix $A_0$ is not fully observed. What we observe is a
given subset $\Omega \subset \{ 1, \ldots, n_1 \} \times \{ 1, \ldots,
n_2 \}$ of cardinality $m$ of the entries of $A_0$, where $m \ll n_1
n_2$. For any matrix $A \in \R^{n_1 \times n_2}$, we define the
\emph{masking} operator $\po (A) \in \R^{n_1 \times n_2}$ such that
$(\po (A))_{j,k} = A_{j,k}$ when $(j,k) \in \Omega$ and $(\po
(A))_{j,k} = 0$ when $(j,k) \notin \Omega$. We define also
$\po^\perp(A) = A - \po(A)$.

Since we consider the case where $m \ll n_1 n_2$, the matrix
completion problem is in general severely ill-posed. So, one needs to
impose a complexity or sparsity assumption on the unknown matrix
$A_0$. This is done by assuming that $A_0$ has low rank, which is the
natural extension of the sparsity assumption for vectors to the
spectrum of a matrix. For the problem of exact reconstruction, other
geometrical assumptions are necessary (such as the incoherency
assumption, see
\cite{candes-recht08,MR2723472,2009arXiv0901.2912A}). Under
such assumptions, it is now well-understood that the principle of
convex relaxation of the rank function is able to reconstruct exactly the
unknown matrix from few measurements, see
\cite{candes-recht08,MR2723472, Gross, recht2009simpler}. Indeed, a
natural approach would be to solve the problem
\begin{equation}
  \label{eq:rank-matrix-completion}
  \begin{split}
    &\text{ minimize } \rank{A} \\
    &\text{ subject to } \po(A) = \po(A_0),
  \end{split}
\end{equation}
but this minimization problem is known to be very hard to solve in
practice even for small matrices, see for instance
\cite{candes-recht08,MR2723472}. The convex envelope of the rank function over the unit ball of the operator norm is
the nuclear norm, see~\cite{fazel2002matrix}, which is given by
\begin{equation*}
  \norm{A}_1 = \sum_{j=1}^{n_1 \wedge n_2} \sigma_j(A),
\end{equation*}
(it is the bi-conjugate of the rank function over the unit ball of the
operator norm), where $\sigma_1(A) \geq \cdots \geq \sigma_{n_1 \wedge
  n_2}(A)$ are the singular values of $A$ in decreasing order. So, the
convex relaxation of~\eqref{eq:rank-matrix-completion} is
\begin{equation}
  \label{eq:nuclear-norm-matrix-completion}
  \begin{split}
    &\text{ minimize } \norm{A}_1 \\
    &\text{ subject to } \po(A) = \po(A_0).
  \end{split}
\end{equation}
This problem has received a lot of attention quite recently, see
\cite{candes-recht08,MR2723472, Gross,
  keshavan2009matrix,recht2009simpler}, among many others. The point
is that, in the same way as the basis pursuit for
vectors,~\eqref{eq:nuclear-norm-matrix-completion} is able to recover
exactly $A_0$ with a large probability, based on an almost minimal
number of samples (under some geometrical assumption).

In literature concerned about computational problems
\cite{springerlink:10.1007/s10107-009-0306-5},
\cite{mazumder_hastie_tibshirani09},
\cite{toh2009accelerated,liu2009implementable}, among others, the
relaxed version of~\eqref{eq:nuclear-norm-matrix-completion} is
considered, since it is easier to construct a solver for it (one can
apply generic first-order optimal methods, such as proximal
forward-backward splitting \cite{MR2203849}, among many other methods)
and since it is more stable in the presence of noise. Note that the
SVT algorithm of~\cite{MR2600248} gives a solution under equality
constraints for an objective function with an extra ridge term
$\norm{A}_1 + \tau \norm{A}_2^2$.  The relaxed problem is simply
formulated as penalized least-squares:
\begin{equation}
  \label{eq:matrix-lasso}
  \hat A_\lambda \in \argmin_{A \in \R^{n_1 \times n_2}} \Big\{
  \frac 12 \norm{\cP_\Omega(A) - \cP_\Omega(A_0)}_2^2 + \lambda
  \norm{A}_1 \Big\},
\end{equation}
where $\lambda > 0$ is a parameter balancing goodness-of-fit and
complexity, measured by the nuclear norm.

Before we go on, we need some notations. The vector of singular values of
$A$ is denoted by $\sigma(A) = (\sigma_1(A), \ldots, \sigma_r(A))$,
sorted in non-increasing order, where $r$ is the rank of $A$.  We
define, for $p \geq 1$, the $p$-Schatten norm by
\begin{equation*}
  \norm{A}_p = |\sigma(A)|_p,
\end{equation*}
which is the $\ell_p$ norm of $\sigma(A)$. We shall denote also by
$\norm{A} = \norm{A}_\infty = \sigma_1(A)$ the operator norm of $A$,
and note that $\norm{A}_2$ is the Frobenius norm, associated to the
Euclidean inner product $\inr{A, B} = \tr(A^\top B)$, where $\tr(A)$
stands for the trace of $A$. For any matrix $A$ its singular values
decomposition (SVD) writes as $A = U \diag(\sigma(X)) V^\top$, where
$\diag(\sigma(X))$ is the diagonal matrix with $\sigma(A)$ on its
diagonal, and $U$ and $V$ are, respectively $n_1 \times r$ and $n_2
\times r$ orthonormal matrices.

\subsection{A new algorithm for matrix completion}

We have in mind to do the same as we did in
Section~\ref{sec:cs-vectors} for the reconstruction of sparse
vectors. For a given weight vector $w = (w_1, \ldots, w_{n_1 \wedge
  n_2})$, with $w_1 \geq \cdots \geq w_{n_1 \wedge n_2} \geq 0$, we
consider
\begin{equation}
  \label{eq:weighted-matrix-lasso}
  \tilde A_{\lambda}^w \in \argmin_{A \in \R^{n_1 \times n_2}} \Big\{
  \frac 12 \norm{\cP_\Omega(A) - \cP_\Omega(A_0)}_2^2 + \lambda
  \norm{A}_{1, w} \Big\},
\end{equation}
where $\norm{A}_{1, w}$ is the weighted nuclear-norm
\begin{equation}\label{eq:weighted-S1-norm}
  \norm{A}_{1, w} = \sum_{j=1}^{n_1 \wedge n_2} \frac{\sigma_j(A)}{w_j},
\end{equation}
with the convention $1/0 = +\infty$. Now, we would like to use the
idea of reweighting using previous estimates, in the same as we did in
Section~\ref{sec:cs-vectors}: if $\hat A_\lambda$ is a solution
to~\eqref{eq:matrix-lasso}, we want to use for instance
\begin{equation*}
  w_j = \sigma_j(\hat A_\lambda),  
\end{equation*}
and find a solution to the problem~\eqref{eq:weighted-matrix-lasso}
for this choice of weights. But, let us stress the fact that, while we
call $\norm{\cdot}_{1,w}$ the weighted nuclear norm, it is not a norm,
since it is not a convex function in general!  A simple
counter-example is as follows. If $w_1 > w_2$ (which is usually the case since
singular values are taken in a non-increasing order) then for $A =
\diag(1,0,\ldots,0)$ and $B = \diag(0,1,0,\ldots,0)$, we have
\begin{equation*}
  \frac{\norm{A}_{1,w} + \norm{B}_{1,w}}{2} = \frac{s_1(A) +
    s_1(B)}{2w_1} = \frac{1}{w_1} < \frac{1}{2} \Big(\frac{1}{w_1} +
  \frac{1}{w_2}\Big) = \Big\|\frac{A+B}{2} \Big\|_{1,w},
\end{equation*}
hence $\norm{\cdot}_{1,w}$ is not convex. Moreover, since the aim of
$\norm{\cdot}_{1, w}$ is to promote low-rank matrices, the weight
vector $w$ should be chosen non-increasing, corresponding precisely to
the case where $\norm{\cdot}_{1, w}$ is non-convex (note that when $0
< w_1 \leq w_2 \leq \cdots\leq w_{n_1\wedge n_2}$, it is easy to prove
that $\norm{\cdot}_{1,w}$ is a norm).
Consequently,~\eqref{eq:weighted-matrix-lasso} is not a convex
minimization problem in general, and a minimization algorithm is very
likely to be stuck at a local minimum. But we would like to stick to
the idea of reweighting, since it worked well for CS.

The first idea that may come to mind is to use a convex relaxation of
the non-convex function $\norm{\cdot}_{1, w}$ (just as convex
relaxation of the rank function led to the nuclear norm), but it
simply leads back to the nuclear norm itself!  Indeed, it can be
proved that if $w_1 \geq w_2 \geq \cdots \geq w_{n_1\wedge n_2}>0$,
the convex envelope of $\norm{\cdot}_{1,w}$ on the ball $\{ A :
\norm{A}_1 \leq 1 \}$ is simply $A \mapsto \norm{A}_{1} / w_1$.

Let us go back to the original problem~\eqref{eq:matrix-lasso}. It
turns out that~\eqref{eq:matrix-lasso} is equivalent to the fact that
$\hat A_\lambda$ satisfies the following fixed-point equation:
\begin{equation}
  \label{eq:fixed-point-matrix-lasso}
  \hat A_\lambda = S_\lambda( \cP_{\Omega}^\perp(\hat A_\lambda) +
  \cP_\Omega(A_0)),
\end{equation}
where $S_\lambda$ is the spectral soft-thresholding operator defined
for every $B\in \R^{n_1\times n_2}$ by
\begin{equation*}
  S_\lambda(B) =  U_B \diag\Big( ( \sigma_1(B) -
  \lambda)_+, \ldots, (\sigma_{\rank(B)}(B) -
  \lambda)_+ \Big) V_B^\top,
\end{equation*}
where $B = U_B \Sigma_B V_B^\top$ is the SVD of $B$, with $\Sigma_B =
\diag(\sigma_1(B), \ldots, \sigma_{\ra(B)}(B))$. This fact is easily
explained. Indeed, define $f_2(A) = \frac 12 \norm{\po (A) - \po
  (A_0)}_2^2$, which is a differentiable function with gradient $\grad
f_2(A) = \po (A) - \po (A_0)$ and $f_1(A) = \lambda \norm{A}_1$, which
is a non-differentiable convex function. We will denote by $\partial
f_1(A)$ the subdifferential of $f_1$ at $A$. The fact that $\hat
A_\lambda \in \argmin_{A} \{ f_2(A) + f_1(A) \}$ is equivalent to the
fact that $0 \in \partial(f_1 + f_2)(\hat A_\lambda) = \{ \grad
f_2(\hat A_\lambda) \} + \partial f_1(\hat A_\lambda)$ (for the
Minkowskii's addition of sets), that we rewrite in the following way:
\begin{equation}
  \label{eq:matrix-lasso-charac}
  \hat A_\lambda - \grad f_2(\hat A_\lambda) - \hat A_\lambda
  \in \partial f_1(\hat A_\lambda).
\end{equation}
On the other hand, a standard tool in convex analysis is the
\emph{proximal} operator, \cite{MR2203849},~\cite{MR0274683}. The
proximal operator of a convex function, for instance $f_1$, is given,
for every $B\in \R^{n_1\times n_2}$, by
\begin{equation*}
  \prox_{f_1}(B) = \argmin_{A \in \R^{n_1 \times n_2}} \Big\{
  \frac{1}{2} \norm{A - B}_2^2 + f_1(A)  \Big\},
\end{equation*}
the minimizer being unique since $A \mapsto \frac 12 \norm{A - B}_2^2
+ f_1(A)$ is strongly convex. But, since $\partial(\frac 12
\norm{\cdot - B}_2^2 + f_1(\cdot))(A) = \{ A - B \} + \partial
f_1(A)$, the point $\prox_{f_1}(B)$ is uniquely determined by the
inclusion
\begin{equation}
  \label{eq:prox-charac}
  B - \prox_{f_1}(B) \in \partial f_1(\prox_{f_1}(B)).
\end{equation}
So, choosing $B = \hat A_\lambda - \grad f_2(\hat A_\lambda)$
in~\eqref{eq:prox-charac} and identifying
with~\eqref{eq:matrix-lasso-charac} leads to the fact that $\hat
A_\lambda$ satisfies the fixed-point equation
\begin{equation*}
  \hat A_\lambda = \prox_{f_1}(\hat A_\lambda - \grad f_2(\hat
  A_\lambda)),
\end{equation*}
which leads to~\eqref{eq:fixed-point-matrix-lasso} on this particular
case, since we know that $\prox_{f_1}(B) = S_\lambda(B)$ (see
Proposition~\ref{prop:weighted-nuclear-prox} below). Note that the
same argument proves that, if we add a ridge term to the nuclear norm
penalization, namely
\begin{equation}
  \label{eq:matrix-enet}
  \hat A_{\lambda, \tau} = \argmin_{A \in \R^{n_1 \times n_2}} \Big\{
  \norm{\cP_\Omega(A) - \cP_\Omega(A_0)}_2^2 + 2 \lambda \norm{A}_1 +
  \tau \norm{A}_2^2 \Big\}
\end{equation}
for any $\tau \geq 0$, then and equivalent formulation is the fixed
point equation
\begin{equation}
  \label{eq:fixed-point-enet}
    \hat A_{\lambda, \tau} = \frac{1}{1 + \tau} S_\lambda(
    \cP_{\Omega}^\perp(\hat A_{\lambda, \tau}) + \cP_\Omega(A_0)),
\end{equation}
and the minimizer is unique this time, since the objective function is
now strongly convex.

The argument given above is at the core of the proximal operator
theory, and leads to the so-called proximal forward-backward splitting
algorithms, see~\cite{MR2203849,MR701288} and~\cite{MR2486527}. Since
these algorithm are optimal among the class of first-order algorithms,
they drawn a large attention in the machine learning community, see
for instance the survey~\cite{bach-book-chapter}. Another advantage in
the case of matrix completion is that such an algorithm can handle
large scale matrices, see Remark~\ref{rem:large-scale} below.

So, we have seen that~\eqref{eq:matrix-lasso}
and~\eqref{eq:fixed-point-matrix-lasso}, or~\eqref{eq:matrix-enet}
and~\eqref{eq:fixed-point-enet} are equivalent formulations of the
same problem. So, instead of
considering~\eqref{eq:weighted-matrix-lasso}, we could consider the
corresponding fixed-point problem. Unfortunately, since
$\norm{\cdot}_{1, w}$ is non-convex, the above arguments based on the
subdifferential does not make sense anymore. But still, we can
consider an estimator defined as a fixed point equation for the
weighted soft-thresholding operator.
\begin{theorem}
  \label{prop:existence-and-unicity}
  Assume that $\tau > 0$ and $w_1 \geq \cdots \geq w_{n_1 \wedge n_2}
  \geq 0$. Let us define the matrix $\hat A_{\lambda}^w$ as the
  solution of the fixed-point equation
  \begin{equation}
    \label{eq:weighted-fixed-point}
    \hat A_{\lambda}^w = \frac{1}{1 + \tau} S_{\lambda}^w(
    \cP_{\Omega}^\perp(\hat A_{\lambda}^w) + \cP_\Omega(A_0)),
  \end{equation}
  where $S_{\lambda}^w$ is the weighted soft-thresholding operator
  given by
  \begin{equation}
    \label{eq:s-lambda-w-def}
    S_{\lambda}^w(B) =  U_B  \diag\Big( \Big( \sigma_1(B) -
    \frac{\lambda}{w_1} \Big)_+, \ldots, \Big(\sigma_{\ra(B)}(B) -
    \frac{\lambda}{w_{\ra(B)}}\Big)_+ \Big) V_B^\top,
  \end{equation}
  where $B = U_B \diag(\sigma(B)) V_B^\top$ is the SVD of $B$. Then,
  the solution to~\eqref{eq:weighted-fixed-point} exists and is
  unique.
\end{theorem}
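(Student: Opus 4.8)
The plan is to show existence and uniqueness via a fixed-point argument for the map $\Phi(A) = \frac{1}{1+\tau} S_\lambda^w(\cP_\Omega^\perp(A) + \cP_\Omega(A_0))$ on $\R^{n_1\times n_2}$ equipped with the Frobenius norm. The crucial ingredient is that $S_\lambda^w$ is Lipschitz with constant $1$: for a non-increasing weight vector with $w_1 \geq \cdots \geq w_{n_1\wedge n_2} \geq 0$, the function $g:\R^{n_1\wedge n_2}\to\R^{n_1\wedge n_2}$ acting coordinatewise as $g_j(\sigma) = (\sigma_j - \lambda/w_j)_+$ is $1$-Lipschitz, monotone, and, because the thresholds $\lambda/w_j$ are non-decreasing in $j$, it is \emph{order-preserving on the cone of sorted vectors} (it maps non-increasing vectors to non-increasing vectors). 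This last point is exactly why we need $w$ sorted: it lets us invoke the Mirsky/von Neumann-type inequality for spectral functions of the form $A \mapsto U_A g(\sigma(A)) V_A^\top$, which gives $\norm{S_\lambda^w(B) - S_\lambda^w(B')}_2 \leq \norm{B - B'}_2$ for all $B, B'$. (Concretely, $S_\lambda^w$ is the proximal-type operator associated to the — possibly nonconvex — spectral function $\sum_j h_j(\sigma_j)$ with $h_j$ chosen so that $\prox_{h_j} = g_j$; nonexpansiveness of such spectral operators under the sorted-threshold condition is classical, cf. the arguments for $S_\lambda$ in~\cite{MR2600248}.)

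Granting the $1$-Lipschitz bound on $S_\lambda^w$, the contraction property of $\Phi$ is immediate: since $\cP_\Omega^\perp$ is an orthogonal projection, $\norm{\cP_\Omega^\perp(A) - \cP_\Omega^\perp(A')}_2 \leq \norm{A - A'}_2$, hence
\begin{equation*}
  \norm{\Phi(A) - \Phi(A')}_2 \leq \frac{1}{1+\tau}\,
  \norm{S_\lambda^w(\cP_\Omega^\perp(A)+\cP_\Omega(A_0)) -
  S_\lambda^w(\cP_\Omega^\perp(A')+\cP_\Omega(A_0))}_2 \leq
  \frac{1}{1+\tau}\,\norm{A - A'}_2 .
\end{equation*}
Because $\tau > 0$, the factor $1/(1+\tau)$ is strictly less than $1$, so $\Phi$ is a strict contraction on the complete metric space $(\R^{n_1\times n_2}, \norm{\cdot}_2)$. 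The Banach fixed-point theorem then yields a unique $\hat A_\lambda^w$ with $\Phi(\hat A_\lambda^w) = \hat A_\lambda^w$, which is exactly~\eqref{eq:weighted-fixed-point}, proving both existence and uniqueness.

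The main obstacle is establishing the nonexpansiveness of $S_\lambda^w$, and in particular being careful about the role of the sorting hypothesis. One must check: (i) that $g$ indeed preserves the sorted order, so that $g(\sigma(B))$ is a legitimate vector of singular values and the expression~\eqref{eq:s-lambda-w-def} is well-defined regardless of the chosen SVD of $B$ (ties and non-uniqueness of $U_B, V_B$ must not matter); and (ii) the spectral nonexpansiveness inequality itself, which follows from writing $\norm{S_\lambda^w(B) - S_\lambda^w(B')}_2^2$, expanding via $\norm{X}_2^2 = \tr(X^\top X)$, and bounding the cross term using von Neumann's trace inequality $\tr(X^\top Y) \leq \inr{\sigma(X), \sigma(Y)}$ together with the fact that $g$ is $1$-Lipschitz and monotone on sorted vectors (so $|g(\sigma) - g(\sigma')|_2 \leq |\sigma - \sigma'|_2$ and an analogous inner-product inequality holds). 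A secondary, minor point is to confirm completeness and the applicability of Banach's theorem on the finite-dimensional space $\R^{n_1\times n_2}$, which is routine. Note that for $\tau = 0$ this argument breaks (the map is merely nonexpansive, not contractive), which is consistent with the hypothesis $\tau > 0$ in the statement.
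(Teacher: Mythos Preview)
Your approach is essentially the same as the paper's: the paper also isolates the nonexpansiveness of $S_\lambda^w$ as the key lemma (stated separately as Proposition~\ref{prop:S-w-lipshitz}) and then runs the contraction argument, phrased there via explicit iterates $A^{k+1}=\frac{1}{1+\tau}S_\lambda^w(\po^\perp(A^k)+\po(A_0))$ forming a Cauchy sequence rather than by invoking Banach's theorem by name. One caveat: the paper stresses that because $\norm{\cdot}_{1,w}$ is nonconvex, $S_\lambda^w$ is \emph{not} a proximal operator and the nonexpansiveness cannot be read off from standard machinery; its proof is a direct computation that decomposes $\inr{A,B}-\inr{\bar A,\bar B}$ into three pieces and applies von Neumann's trace inequality to each, so your description of this step as ``classical'' is optimistic even though your sketched ingredients (expansion, von Neumann, monotone $1$-Lipschitz $g$ preserving the sorted order) are the right ones.
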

Theorem~\ref{prop:existence-and-unicity} is proved in
Section~\ref{sec:proofs-cs-matrices} below, and is a by-product of our
analysis of the iterative scheme to approximate the solution
of~\eqref{eq:weighted-fixed-point}. The parameter $\tau > 0$ can be
arbitrarily small (in our numerical experiments we take it equal to
zero, see Section~\ref{sec:mc-numerical-study}), but it ensures
unicity and convergence of the iterative scheme proposed below. Once
again, let us stress the fact that~\eqref{eq:weighted-fixed-point}
(with $\tau=0$) is not equivalent to~\eqref{eq:weighted-matrix-lasso}
in general, since $A \mapsto \norm{A}_{1, w}$ is not convex.

The consideration of~\eqref{eq:weighted-fixed-point} has several
advantages: we guarantee unicity of the solution, while the
problem~\eqref{eq:weighted-matrix-lasso} may have several solutions,
and it is easy to solve the fixed-point
problem~\eqref{eq:weighted-fixed-point} using iterations. Even
further, from a numerical point of view, it can be easily used
together with a continuation algorithm, as explained in
Section~\ref{sec:mc-numerical-study} below, to compute a set of
solutions for several values of the smoothing parameter $\lambda$.

The next Theorem proves that iterates of the fixed-point
Equation~\eqref{eq:weighted-fixed-point} converges exponentially fast
to the solution.
\begin{theorem}
  \label{thm:algorithm_convergence}
  Take $A^0$ as the matrix with zero entries and define for any $k
  \geq 0$:
  \begin{equation}
    \label{eq:iterations}
    A^{k+1} = \displaystyle \frac{1}{1 + \tau}
    S_\lambda^{w}(\cP_{\Omega}^\perp(A^k) + \cP_\Omega(A_0)).
  \end{equation}
  Then, for any $n \geq 1$, one has:
  \begin{equation*}
    \| \hat A_\lambda^w - A^n \|_2 \leq \frac{1}{\tau (1 +
      \tau)^{n}} \| \po(A_0) \|_2,
  \end{equation*}
  where $\hat A_\lambda^w$ is the solution
  of~\eqref{eq:weighted-fixed-point}.
\end{theorem}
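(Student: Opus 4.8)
The plan is to show that the map $G(A) = \frac{1}{1+\tau} S_\lambda^w(\cP_\Omega^\perp(A) + \cP_\Omega(A_0))$ is a strict contraction in the Frobenius norm, with contraction factor $\frac{1}{1+\tau}$, and then apply the Banach fixed-point theorem to get both existence/uniqueness (which also proves Theorem~\ref{prop:existence-and-unicity}) and the geometric rate. The key structural observation is that the fixed-point iteration~\eqref{eq:iterations} decomposes into two operations: an affine map $A \mapsto \cP_\Omega^\perp(A) + \cP_\Omega(A_0)$, which is non-expansive because $\cP_\Omega^\perp$ is an orthogonal projection (so $\norm{\cP_\Omega^\perp(A) - \cP_\Omega^\perp(B)}_2 \leq \norm{A-B}_2$), followed by the rescaled thresholding $\frac{1}{1+\tau} S_\lambda^w$. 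If I can show that $S_\lambda^w$ itself is non-expansive for the Frobenius norm, i.e. $\norm{S_\lambda^w(B) - S_\lambda^w(C)}_2 \leq \norm{B - C}_2$ for all $B, C$, then composing gives $\norm{G(A) - G(B)}_2 \leq \frac{1}{1+\tau}\norm{A-B}_2$, and since $\tau > 0$ this is a genuine contraction.

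The main obstacle is precisely the non-expansiveness of the weighted soft-thresholding operator $S_\lambda^w$. For the \emph{unweighted} operator $S_\lambda$ this is classical: $S_\lambda = \prox_{\lambda\norm{\cdot}_1}$ is the proximal operator of a convex function, hence firmly non-expansive, and the key ingredient is the fact that the scalar soft-thresholding $x \mapsto (|x| - \lambda)_+ \sgn(x)$ is $1$-Lipschitz together with a singular-value inequality (of Mirsky/von~Neumann type). For the \emph{weighted} operator the thresholds $\lambda/w_j$ depend on the index $j$, so $S_\lambda^w$ is no longer a proximal operator of a convex function, and I cannot invoke convex-analytic firm non-expansiveness directly. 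I would instead argue as follows: since $w_1 \geq w_2 \geq \cdots \geq w_{n_1 \wedge n_2} \geq 0$, the thresholds $\lambda/w_1 \leq \lambda/w_2 \leq \cdots$ are non-decreasing; applying the non-decreasing family of soft-thresholds to a non-increasing vector of singular values $\sigma_1(B) \geq \sigma_2(B) \geq \cdots$ \emph{preserves} the non-increasing order of the output, so $S_\lambda^w$ acts coherently on the singular-value level, and one can apply a weighted version of the Mirsky-type inequality: for two real vectors $a, b$ sorted in non-increasing order and a component-wise $1$-Lipschitz map $h = (h_1, \ldots, h_n)$ applied to them, one has control of $\sum_j (h_j(a_j) - h_j(b_j))^2$ in terms of the singular-value distance. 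The cleanest route is probably to reduce to the scalar statement that each $t \mapsto (t - \lambda/w_j)_+$ is $1$-Lipschitz and non-negative, combined with the fact (Weyl/Hoffman–Wielandt inequalities) that $\sum_j (\sigma_j(B) - \sigma_j(C))^2 \leq \norm{B-C}_2^2$, and then a direct estimate on the reconstructed matrices using the SVD representation~\eqref{eq:s-lambda-w-def}.

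Once the contraction estimate $\norm{G(A) - G(B)}_2 \leq \frac{1}{1+\tau}\norm{A-B}_2$ is in hand, the rest is routine. The Banach fixed-point theorem gives a unique $\hat A_\lambda^w$ with $G(\hat A_\lambda^w) = \hat A_\lambda^w$, proving Theorem~\ref{prop:existence-and-unicity}. For the rate, I start the iteration at $A^0 = 0$: then $A^1 = G(0) = \frac{1}{1+\tau} S_\lambda^w(\cP_\Omega(A_0))$, and I bound $\norm{\hat A_\lambda^w - A^0}_2 = \norm{\hat A_\lambda^w}_2$ via the telescoping/geometric-series estimate $\norm{\hat A_\lambda^w - A^0}_2 \leq \sum_{k \geq 0} \norm{A^{k+1} - A^k}_2 \leq \frac{1}{1 - 1/(1+\tau)} \norm{A^1 - A^0}_2 = \frac{1+\tau}{\tau}\norm{A^1}_2$, together with $\norm{A^1}_2 = \frac{1}{1+\tau}\norm{S_\lambda^w(\cP_\Omega(A_0))}_2 \leq \frac{1}{1+\tau}\norm{\cP_\Omega(A_0)}_2 = \frac{1}{1+\tau}\norm{\po(A_0)}_2$ (using that $S_\lambda^w$ is non-expansive and fixes $0$). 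This yields $\norm{\hat A_\lambda^w - A^0}_2 \leq \frac{1}{\tau}\norm{\po(A_0)}_2$, and applying the contraction $n$ times gives $\norm{\hat A_\lambda^w - A^n}_2 = \norm{G^n(\hat A_\lambda^w) - G^n(A^0)}_2 \leq \frac{1}{(1+\tau)^n}\norm{\hat A_\lambda^w - A^0}_2 \leq \frac{1}{\tau(1+\tau)^n}\norm{\po(A_0)}_2$, which is the claimed bound.
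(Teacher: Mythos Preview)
Your overall architecture is exactly the paper's: reduce everything to the non-expansiveness of $S_\lambda^w$ in Frobenius norm, compose with the orthogonal projection $\cP_\Omega^\perp$ to get a $(1+\tau)^{-1}$-contraction, and then finish with a geometric-series/Banach fixed-point argument. Your endgame (bounding $\norm{A^1-A^0}_2\leq(1+\tau)^{-1}\norm{\po(A_0)}_2$ and iterating the contraction) is essentially identical to the paper's, just organized slightly differently.

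The gap is in your sketch of the key lemma $\norm{S_\lambda^w(B)-S_\lambda^w(C)}_2\leq\norm{B-C}_2$. The Hoffman--Wielandt inequality gives you $\sum_j(\sigma_j(B)-\sigma_j(C))^2\leq\norm{B-C}_2^2$, and the scalar $1$-Lipschitz property of $t\mapsto(t-\lambda/w_j)_+$ then controls $\sum_j\big((\sigma_j(B)-\lambda/w_j)_+-(\sigma_j(C)-\lambda/w_j)_+\big)^2$. But $S_\lambda^w(B)$ and $S_\lambda^w(C)$ are built with \emph{different} singular vectors $U_B,V_B$ versus $U_C,V_C$, so $\norm{S_\lambda^w(B)-S_\lambda^w(C)}_2^2$ is \emph{not} the sum of squared differences of their singular values; the ``direct estimate on the reconstructed matrices'' you allude to is precisely the missing step, and it does not follow from the ingredients you list. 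For the unweighted operator this issue disappears because $S_\lambda=\prox_{\lambda\norm{\cdot}_1}$ and firm non-expansiveness comes for free from convex analysis, but as you correctly note that route is blocked here. The paper closes the gap by a bare-hands computation: expand $\norm{B-C}_2^2-\norm{S_\lambda^w(B)-S_\lambda^w(C)}_2^2$, split $\inr{B,C}-\inr{\bar B,\bar C}$ into three cross terms $\inr{B-\bar B,C-\bar C}+\inr{\bar B,C-\bar C}+\inr{B-\bar B,\bar C}$, bound each with von~Neumann's trace inequality $\inr{X,Y}\leq\sum_j\sigma_j(X)\sigma_j(Y)$, and then check by cases (using the monotonicity $w_1\geq\cdots\geq w_{n_1\wedge n_2}$ and the definition of the threshold indices) that the resulting expression is nonnegative. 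That case analysis is where the assumption on the ordering of the weights is actually used; your sketch does not invoke it in a load-bearing way.
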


The proof of Theorem~\ref{thm:algorithm_convergence} is given in
Section~\ref{sec:proofs-cs-matrices}. The main step of the proof is to
establish the Lipshitz property of the weighted soft-thresholding
operator, see Proposition~\ref{prop:S-w-lipshitz}. Since $S_\lambda^w$
is not a proximal operator (the objective function is not convex), we
cannot use directly the property of firm-expansivity, which is a
direct consequence of the definition of a proximal operator, see the
discussion in Section~\ref{sec:proofs-cs-matrices}.

\subsection{Numerical study}
\label{sec:mc-numerical-study}

\subsubsection{Algorithms}

In this Section we compare empirically the quality of reconstruction
using nuclear norm minimization~\eqref{eq:matrix-lasso} (NNM), or
equivalently~\eqref{eq:fixed-point-matrix-lasso}, and weighted
spectral soft-thresholding~\eqref{eq:weighted-fixed-point} (WSST). To
compute the NNM we use the Accelerated Proximal Gradient (APG)
algorithm of~\cite{toh2009accelerated} using the \texttt{MATLAB}
package~\texttt{NNLS}, which is a state-of-the-art solver for the
minimization problem~\eqref{eq:matrix-lasso}. This algorithm is based
on an accelerated proximal gradient algorithm, itself based on the
accelerated gradient of Nesterov,
see~\cite{MR701288,nesterov2007gradient} and the FISTA algorithm, see
\cite{MR2486527} and see also~\cite{Ji:2009:AGM:1553374.1553434} for a
similar algorithm. In the APG algorithm, we use the linesearch and the
continuation techniques, see \cite{toh2009accelerated}, but we don't
use truncation, since it led to poor results in the problems
considered here. The target value of $\lambda$ for NNM and WSST
(see~\eqref{eq:matrix-lasso} and~\eqref{eq:weighted-fixed-point}) is
simply taken as $\lambda_{\mathrm{target}} = \eps \times \norm{\po(
  A_0)}_\infty$, with $\eps = 10^{-4}$ or $\eps = 10^{-3}$ depending
on the problem, see below. The solution coming out of the APG
algorithm is denoted by $\hat A_\lambda^{(0)}$. Note that we could
have used the FPC~\cite{springerlink:10.1007/s10107-009-0306-5} or
SVT~\cite{MR2600248} algorithms instead, but it led in our experiments
to poorer results compared to the APG (in particular when looking for
solutions with a rank of order, say, 100 on ``real'' matrices, like in
the inpainting or recommanding systems, see below).

The WSST is computed following the Algorithm~\ref{alg:WSST} below. The
first while loop is a continuation loop, that goes progressively to
$\lambda_{\mathrm{target}}$. Doing this instead of using
$\lambda_{\mathrm{target}}$ directly is known to improve stability and
rate of convergence of the algorithm. It does not take more time than
using $\lambda_{\mathrm{target}}$ directly (actually, it usually takes
less time), since we use warm starts: when taking a smaller $\lambda$,
we use the previous value $A_{\mathrm{new}}$ (the solution with the
previous $\lambda$) as a starting point.  Once we reached
$\lambda_{\mathrm{target}}$, we obtain a first solution of the fixed
point problem~\eqref{eq:weighted-fixed-point}, denoted by $\hat
A_\lambda^{(1)}$. Then, we update the weights by taking $w_j =
\sigma_j(\hat A_\lambda^{(1)})$, and we start all over. We don't use a
continuation loop again, since we are already at the desired value of
$\lambda$. We keep the parameter $\lambda$ fixed, we only repeat the
process of updating the weights and finding the solution to the fixed
point~\eqref{eq:weighted-fixed-point} $K$ times. By doing this, we are
typically going to decrease (eventually a lot) the final rank of the
WSST, while keeping a good reconstruction accuracy. This process of
updating the weights is usually not long. Typically, after a small
number of iterations, two fixed-point solutions before and after an
update are very close, so that our choice $K = 50$ is typically too
large, but we keep it this way to ensure a good stability of the final
solution.

Note that in Algorithm~\ref{alg:WSST} we use the
iterations~\eqref{eq:iterations} with $\tau = 0$, since it gives
satisfactory results. We use a simple stopping rule
$\norm{A_{\mathrm{new}} - A_{\mathrm{old}}}_2 /
\norm{A_{\mathrm{old}}}_2 \leq \text{tol}$ with $\text{tol} = 5 \times
10^{-4}$ or $\text{tol} = 10^{-3}$ depending on the scaling of the
problem, see below. We used in all our computations $q = 0.7$ and $K =
50$.  For a fair comparison, we always use, for a reconstruction
problem, the same parameters $\eps, \mathrm{tol}$ and $\lambda$ for
both NNM and WSST. Of course, for the WSST we need to rescale
$\lambda$ by multiplying it by $w_1$ (the first coordinate of the
weights vector, which is equal to $\sigma_1(\hat A^{(0)})$ at the
first iteration).

\begin{remark}
  \label{rem:large-scale}
  A good point with WSST is that it can handle large scale matrices,
  since at each iteration one only needs to store $A_{\mathrm{old}}$,
  which is a low rank matrix (coming out of a previous spectral
  soft-thresholding) and $\po(A_{\mathrm{old}} + A_0)$, which is a
  sparse matrix.
\end{remark}

\begin{remark}
  The overall computational cost of WSST is obviously much longer than
  the one of NNM, since we use $K$ iterations, and since we don't use
  accelerated gradient, linesearch and other accelerating recipes in
  our implementation of WSST. This is done purposely: we want to
  compare the quality of reconstruction of the ``pure'' WSST, without
  helping computational tricks, that usually improves rate of
  convergence, but accuracy of reconstruction as well (this is the
  case if one compares NNM with and without these tools).
\end{remark}

\begin{algorithm}[htbp]
  \small
  \KwIn{The observed entries $\po(A_0)$, a preliminary reconstruction
    $\hat A_\lambda^{(0)}$ and parameters $\lambda_1 >
    \lambda_{\text{target}} > 0$, $0 < q, \mathrm{tol} < 1$, $K \geq
    1$}
  
  \KwOut{The WSST reconstruction $\hat A_\lambda^{(K)}$}%
  
  Put $A_{\text{new}} = 0$, $\lambda = \lambda_1$ and take $w_j =
  \sigma_j(\hat A_\lambda^{(0)})$

  \While{$\lambda > \lambda_{\mathrm{target}}$}{

    Put $\delta = +\infty$

    \While{$\delta > \mathrm{tol}$}{
      
      $A_{\text{old}} = A_{\text{new}}$
      
      $A_{\text{new}} = S_\lambda^w (A_{\text{old}} -
      \po(A_{\text{old}}) + \po(A_0) )$ 

      $\delta = \norm{A_{\mathrm{new}} - A_{\mathrm{old}}}_2 /
      \norm{A_{\mathrm{old}}}_2$ }
        
    $\lambda = \lambda \times q$           
  }
  
  Put $\hat A_\lambda^{(1)} = A_{\mathrm{new}}$
  
  \For{$k=1, \ldots, K$}{
    
    Put $w_j = \sigma_j(\hat A_\lambda^{(k)})$ and $\delta = +\infty$
    
    \While{$\delta > \mathrm{tol}$}{
      
      $A_{\text{old}} = A_{\text{new}}$
      
      $A_{\text{new}} = S_\lambda^w (A_{\text{old}} -
      \po(A_{\text{old}}) + \po(A_0) )$ 
     
      $\delta = \norm{A_{\mathrm{new}} - A_{\mathrm{old}}}_2 /
      \norm{A_{\mathrm{old}}}_2$        
    }
                
  }

  \Return $\hat A_\lambda^{(K)}$

  \caption{Computation of the iteratively weighted spectral
    soft-thresholding.}
  \label{alg:WSST}
\end{algorithm}

\subsubsection{Phase transition}

In Figure~\ref{fig:phase-transision}, we give a first empirical
evidence of the fact that WSST improves a lot upon NNM. For each $r
\in \{ 5, 10, 15, \ldots, 80 \}$, we repeat the following experiment
50 times. We draw at random $U$ and $V$ as $500 \times r$ matrices
with $N(0, 1)$ i.i.d entries, and put $A_0 = U V^\top$ (which is rank
$r$ a.s.). Then, we choose uniformly at random $30\%$ of the entries
of $A_0$, and compute the NNM and the WSST based on this matrix. In
Figure~\ref{fig:phase-transision}, we show, for each $r$ (x-axis), the
boxplots of the relative reconstruction errors $\norm{\hat A - A_0}_2
/ \norm{A_0}_2$ over the 50 repetitions for $\hat A =$ NNM (top-left)
and $\hat A$ = WSST (top-right). On this example, we observe that NNM
is not able to recover matrices with a rank larger than 35, while WSST
can recover matrices with a rank up to 70. The boxplots of the ranks
recovered by NNM and WSST are on the second line, where we observe
that WSST always recovers the true rank up to a rank of order $70$,
while NNM correctly recovers the rank (only most of the time) up to a
rank $35$, and overestimates it a lot for larger ranks. So, on this
simulated example, we observe a serious improvement of NNM using WSST,
since the latter has the exact reconstruction property for matrices
with twice a larger rank ($70$ instead of $35$).

\setlength{\figlength}{7.1cm}%
\setlength{\figwidth}{7.1cm}

\begin{figure}[htbp]
  \centering
  \includegraphics[width=\figwidth,height=\figlength]{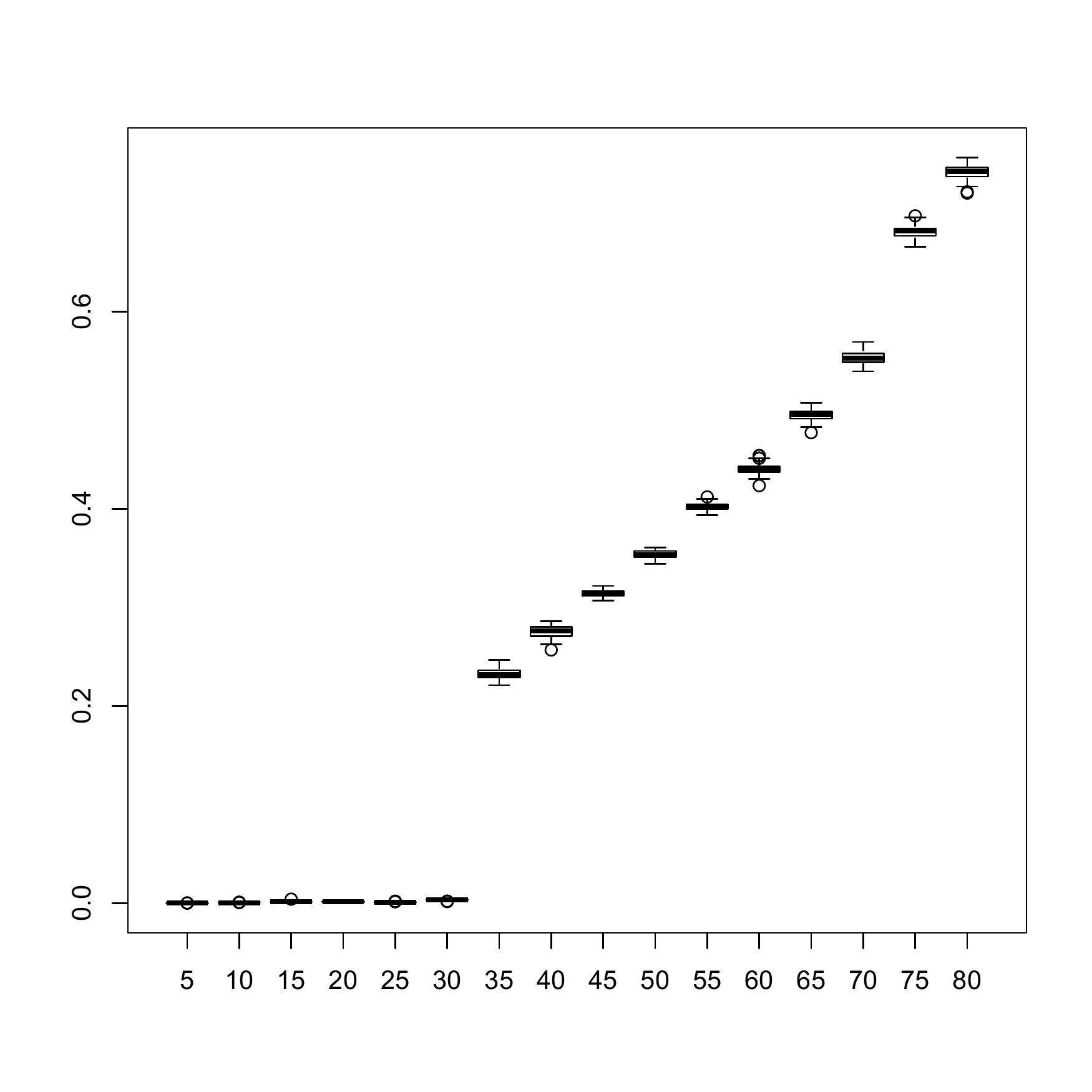}
  \hspace{-0.9cm}
  \includegraphics[width=\figwidth,height=\figlength]{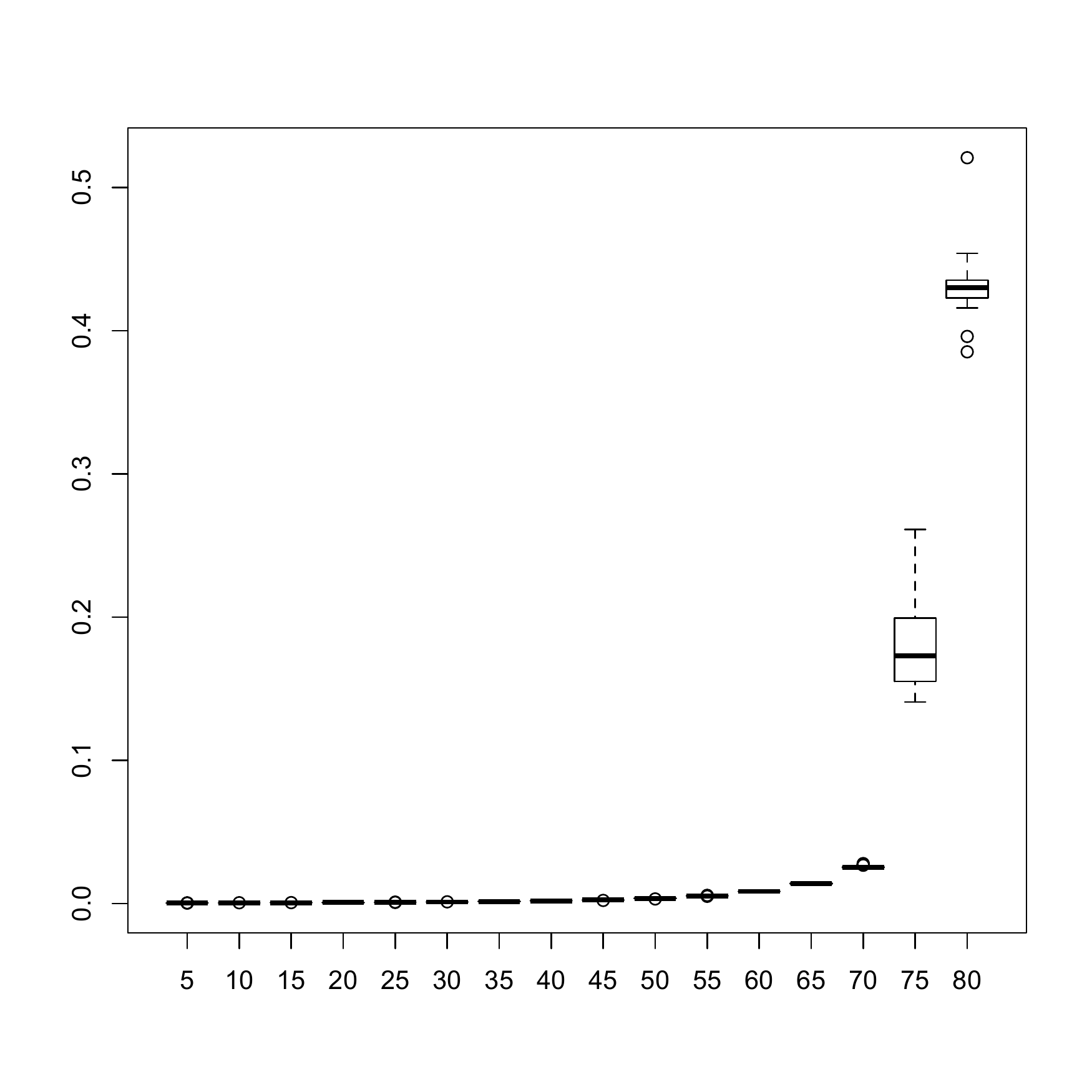} \\
  \vspace{-1cm}
  \includegraphics[width=\figwidth,height=\figlength]{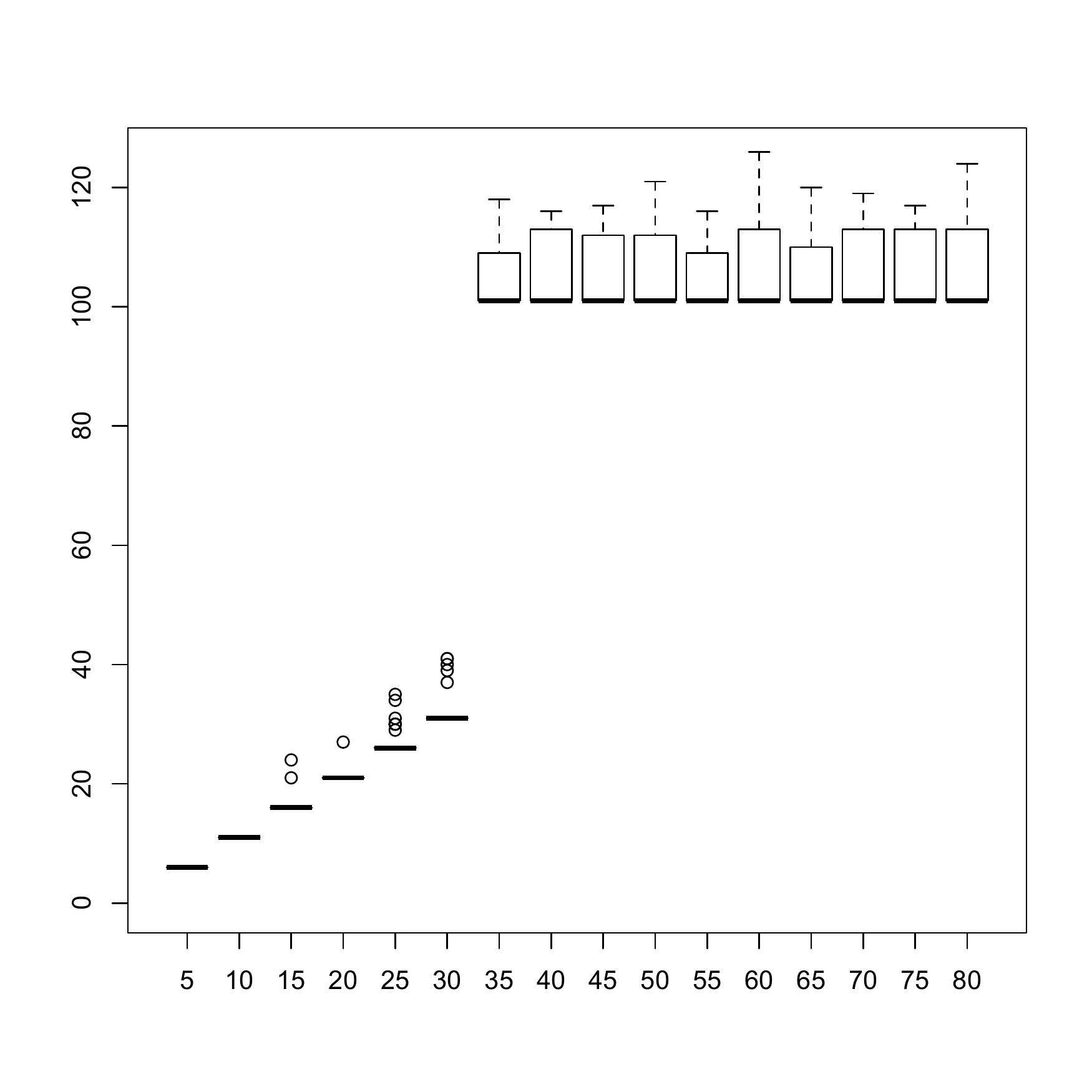}
  \hspace{-0.9cm}
  \includegraphics[width=\figwidth,height=\figlength]{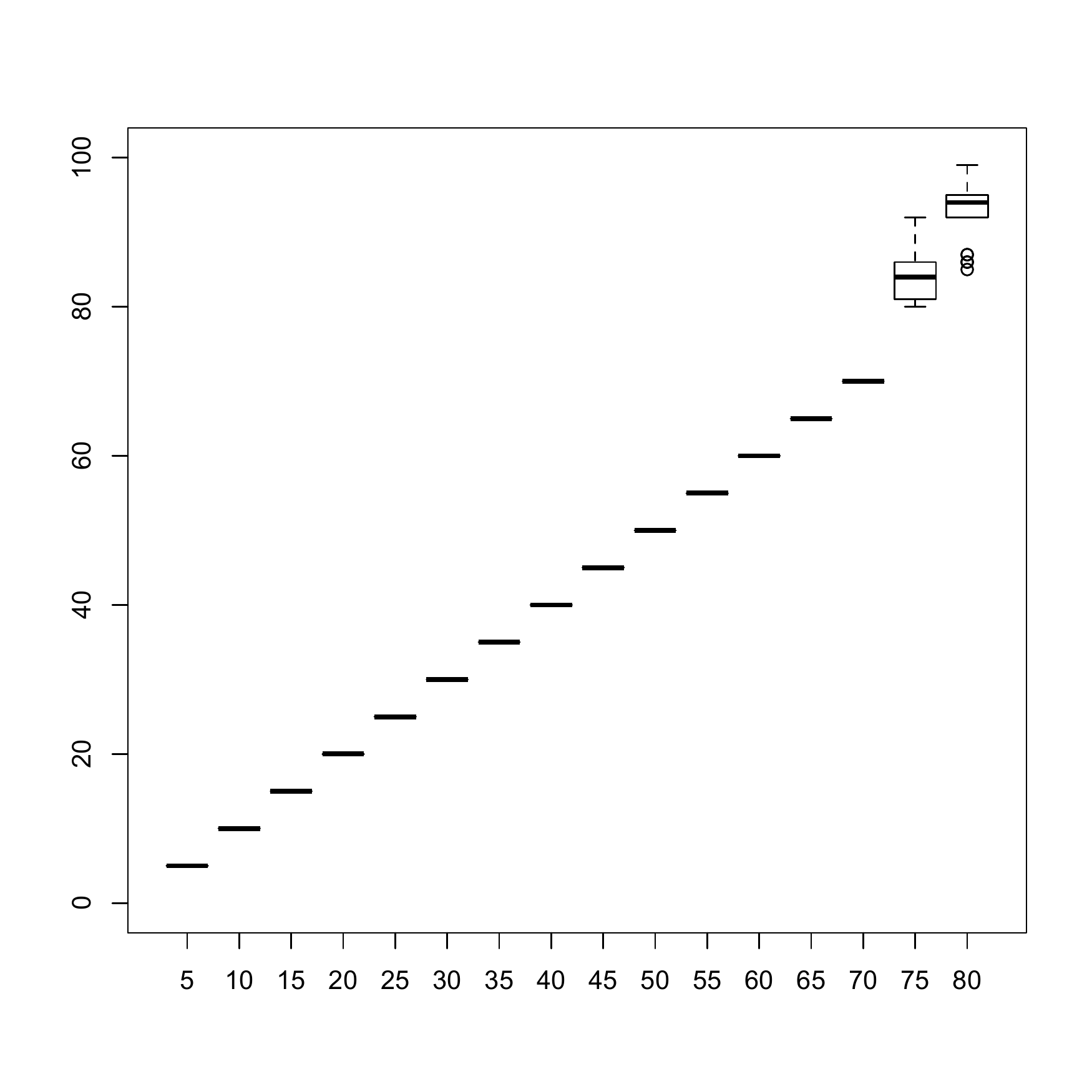}

  \caption{Boxplots of the recovery errors (first line) and recovered
    ranks (second line) using NNM (left) and WSST (right) of a
    $500\times500$ rank $r$ matrix with $r$ between $5$ and $80$
    (x-axis)}
  \label{fig:phase-transision}
\end{figure}

\subsubsection{Image inpainting}

In Figure~\ref{fig:inpainting-examples}, we consider the
reconstruction of four test images (``lenna'', ``fingerprint'',
``flinstones'' and ``boat''). Each test image has $512 \times 512$
pixels, and is of rank $50$. We only observe $30\%$ of the pixels,
picked uniformly at random, with no noise. The observations are given
in the first line of Figure~\ref{fig:inpainting-examples}, where
non-observed pixels are represented by white. The second line gives
the reconstruction obtained using NNM. The third line shows the
difference between the true image and the recovery by NNM, where blue
is perfect recovery and red is bad recovery. The fourth line shows the
reconstruction using WSST and the fifth shows the difference between
the true image and recovery by WSST.

On all four images, the recovery is much better using WSST, in
particular on the fingerprint and flinstones images. This can be
understood form the fact that these two are very structured
images. The most surprising fact is that all the four reconstructions
using NNM have rank 150 (because of the way we choose $\lambda$, see
above), while the rank of the reconstructions obtained with WSST is
never more than 90 (with the same choice of $\lambda$). So, WSST leads
to simpler (with a lower rank, which is better in terms of
compression/description) and more accurate reconstructions. In particular, we
observe that WSST is able to recover in a more precise way the
underlying geometry of the true images (for instance, on the third
line, first column, we can recognize the shape of lenna, while this is
not the case with WSST).

\setlength{\figlength}{3cm}%
\setlength{\figwidth}{3cm}

\begin{figure}[htbp]
  \centering
  
  \includegraphics[width=\figwidth,height=\figlength]{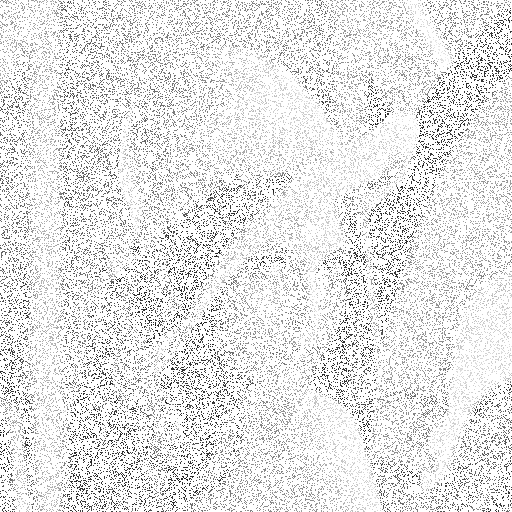} 
  \includegraphics[width=\figwidth,height=\figlength]{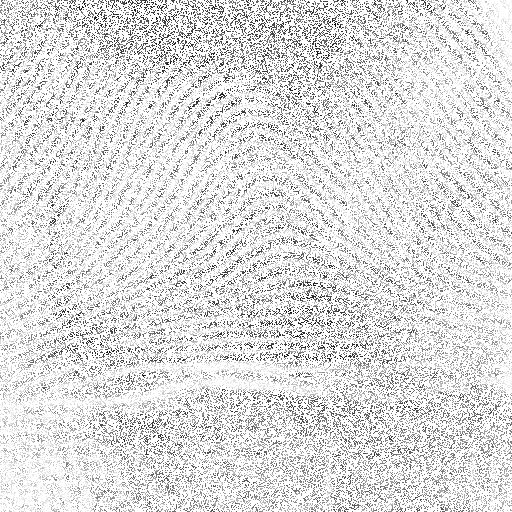}
  \includegraphics[width=\figwidth,height=\figlength]{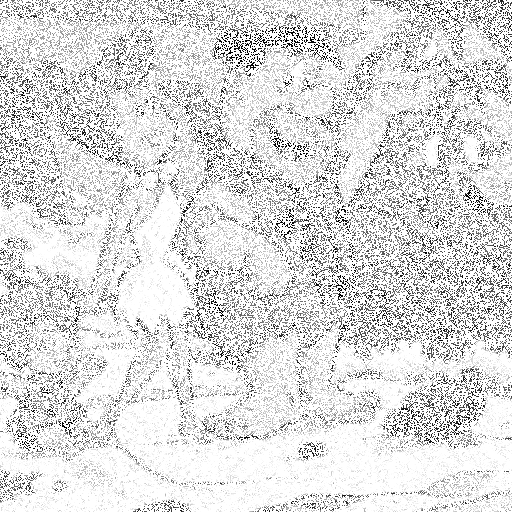}
  \includegraphics[width=\figwidth,height=\figlength]{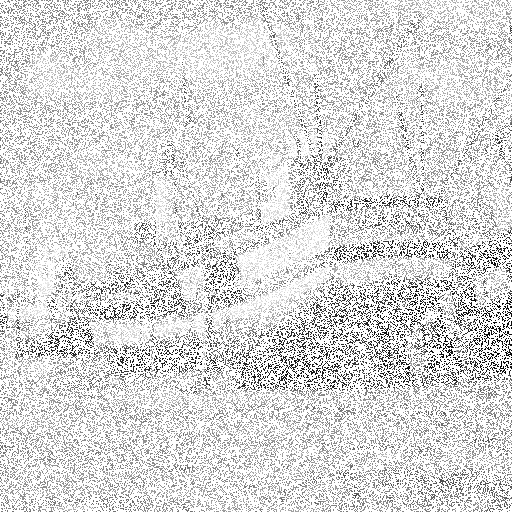}
  \\
  \includegraphics[width=\figwidth,height=\figlength]{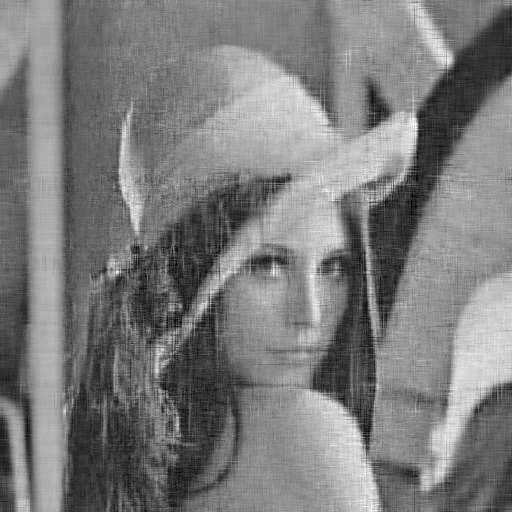}
  \includegraphics[width=\figwidth,height=\figlength]{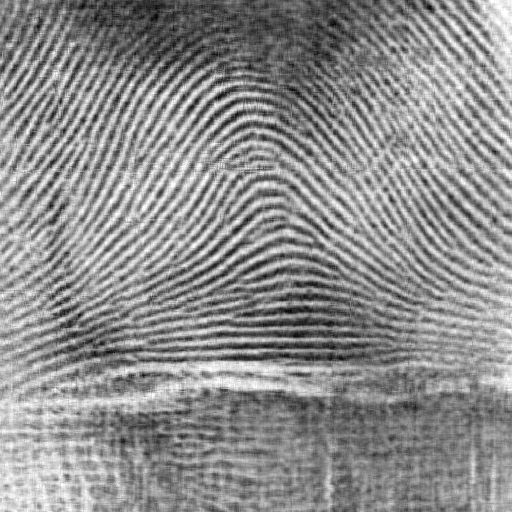} 
  \includegraphics[width=\figwidth,height=\figlength]{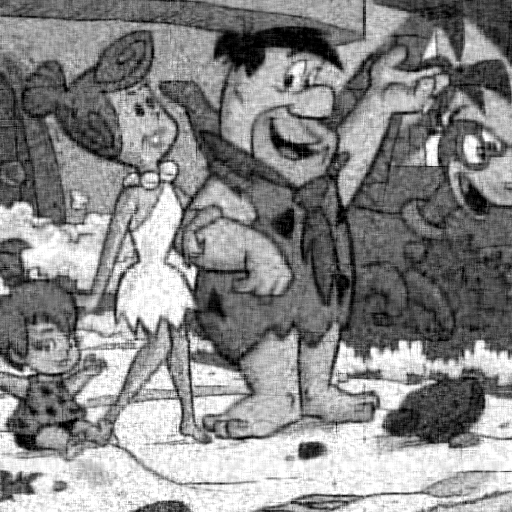}
  \includegraphics[width=\figwidth,height=\figlength]{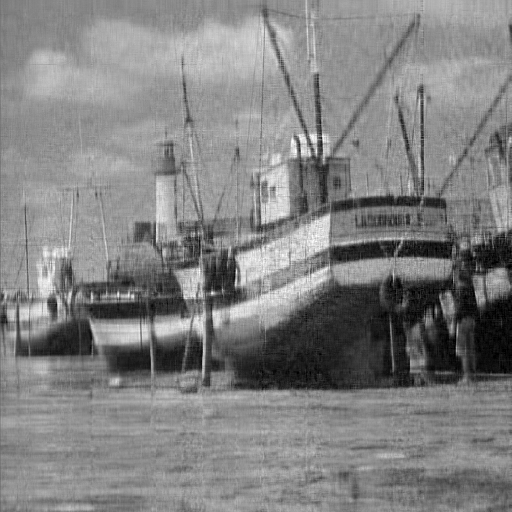}
  \\ 
  \includegraphics[width=\figwidth,height=\figlength]{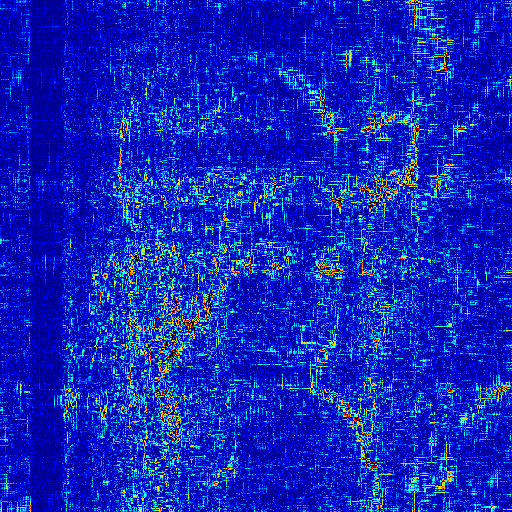}
  \includegraphics[width=\figwidth,height=\figlength]{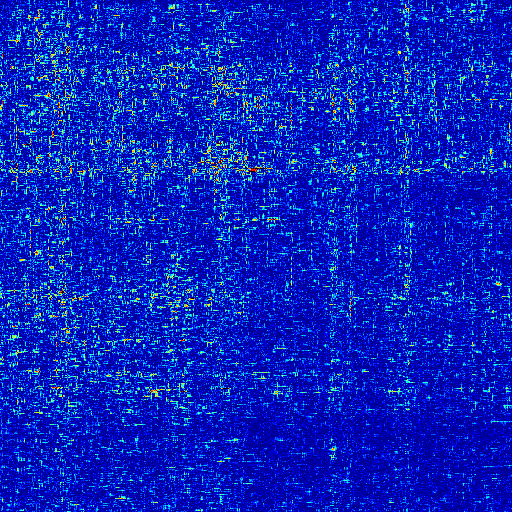}
  \includegraphics[width=\figwidth,height=\figlength]{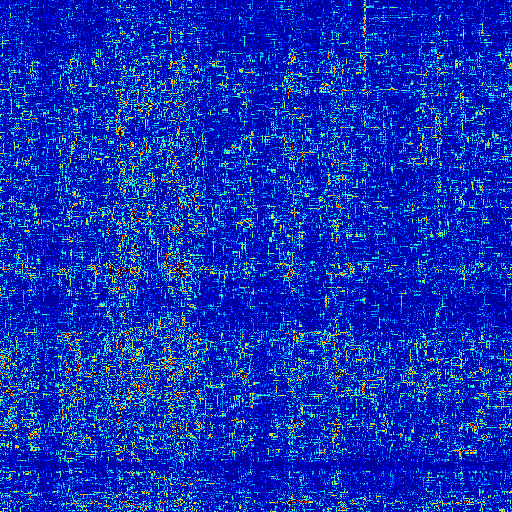}
  \includegraphics[width=\figwidth,height=\figlength]{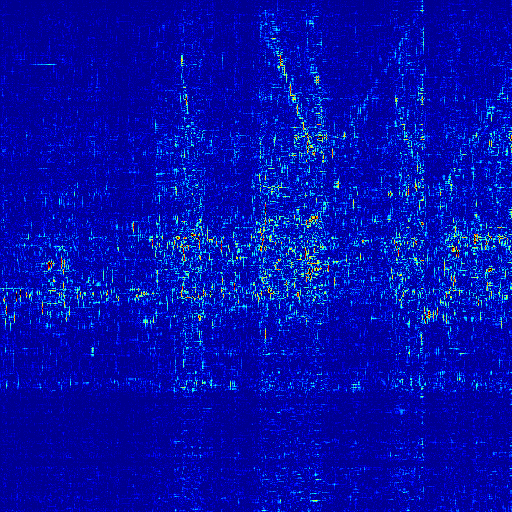}
  \\  
  \includegraphics[width=\figwidth,height=\figlength]{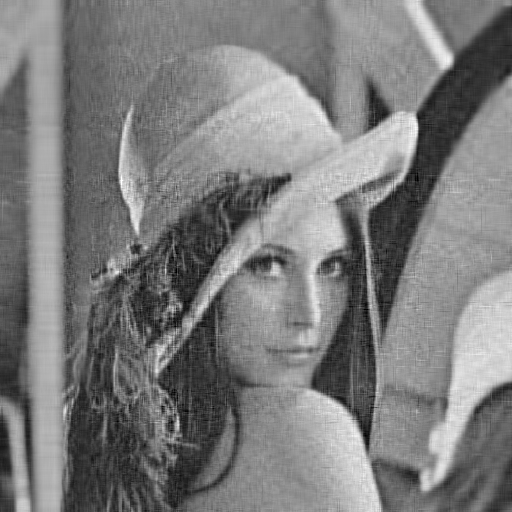} 
  \includegraphics[width=\figwidth,height=\figlength]{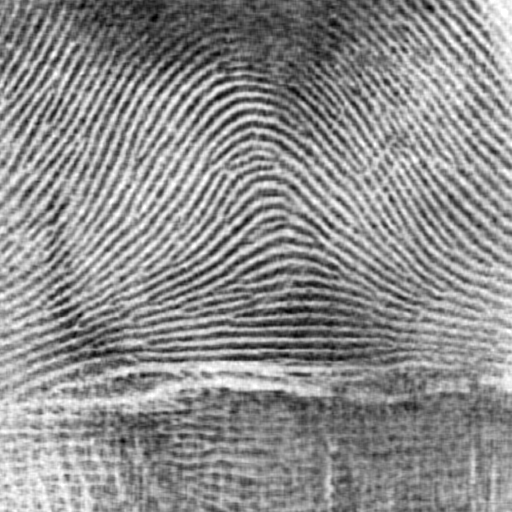}
  \includegraphics[width=\figwidth,height=\figlength]{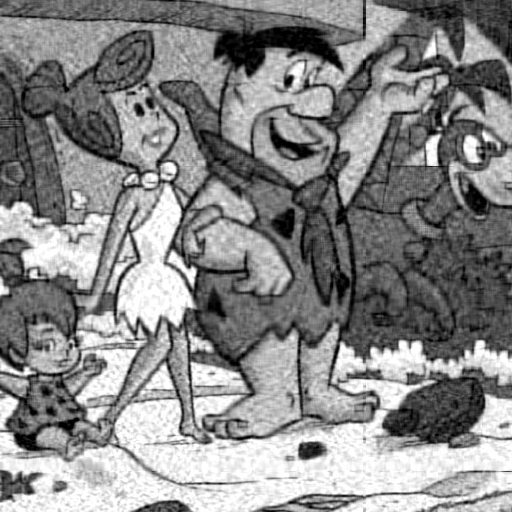}
  \includegraphics[width=\figwidth,height=\figlength]{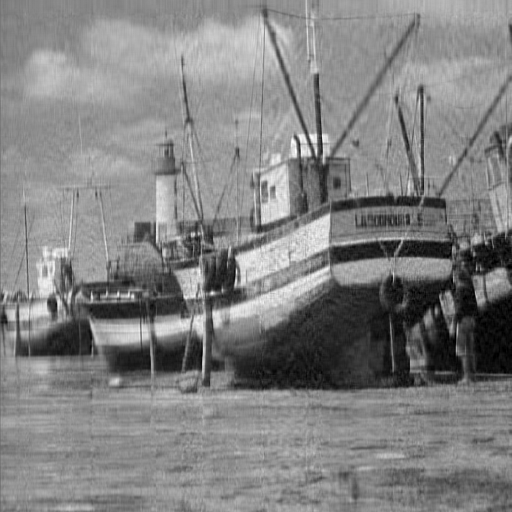}
  \\
  \includegraphics[width=\figwidth,height=\figlength]{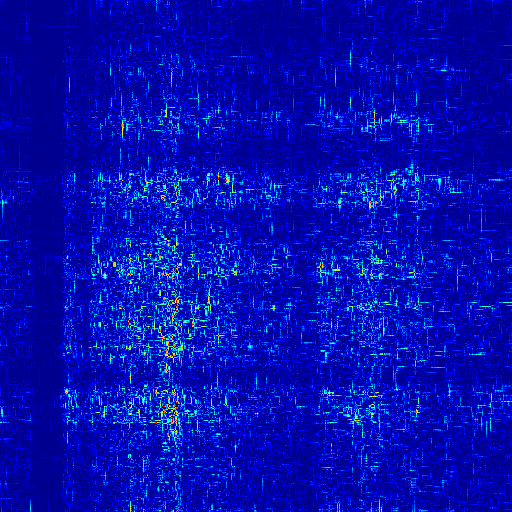} 
  \includegraphics[width=\figwidth,height=\figlength]{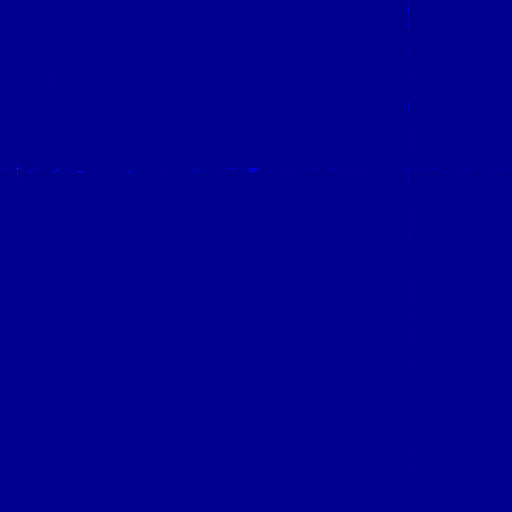}
  \includegraphics[width=\figwidth,height=\figlength]{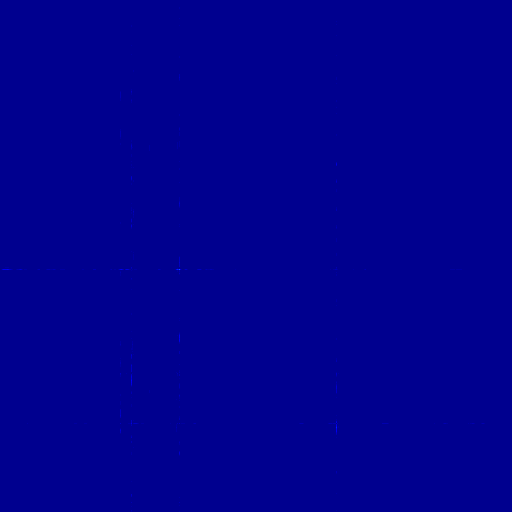}
  \includegraphics[width=\figwidth,height=\figlength]{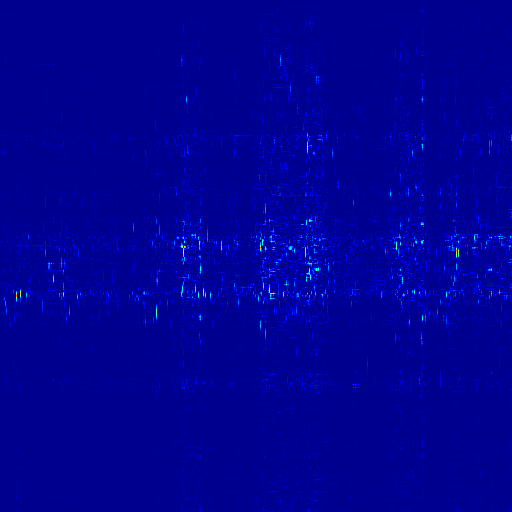}
  \\

  \caption{Image reconstruction using NNM and WSST. \emph{First line}:
    observed pixels (white means non-observed). \emph{Second line}:
    reconstruction using NNM. \emph{Third line}: difference between
    truth and NNM (red is bad, blue is good). \emph{Fourth line}:
    recovery using WSST. \emph{Fifth line}: difference between truth
    and WSST.}
  \label{fig:inpainting-examples}
\end{figure}

\subsubsection{Collaborative filtering}

Now, we consider matrix completion for a real dataset: the MovieLens
data. It contains 3 datasets, available on
\texttt{http://www.grouplens.org/}:
\begin{itemize}
\item \texttt{movie-100K:} 100,000 ratings for 1682 movies by 943 users
\item \texttt{movie-1M:} 1 million ratings for 3900 movies by 6040 users
\item \texttt{movie-10M:} 10 million ratings and 100,000 tags for
  10681 movies by 71567 users
\end{itemize}
The ranks of the users are integers between $1$ and $5$. In each 3
datasets, each user has rated at least $20$ movies. For our
experiments, we simply choose uniformly at random half of the ratings
of each user to form a subset $\Gamma$ of the entire subset $\Omega$
or ratings. Then, based on the ratings in $\Gamma$, we try to predict
the ratings in $\Omega - \Gamma$. Since many entries are missing, we
measure the accuracy of completion by computing the relative error in
$\Omega - \Gamma$. If $\hat A$ is a reconstruction matrix, we
reproduce in Table~\ref{tab:collaborative} below the values of
\begin{equation}
  \mathrm{err} = \norm{\cP_{\Omega-\Gamma}(\hat A) -
    \cP_{\Omega-\Gamma}(A_0)}_2 / \norm{\cP_{\Omega-\Gamma}(A_0)}_2,
\end{equation}
together with the rank used for the reconstruction. We observe in
Table~\ref{tab:collaborative} that WSST improves a lot upon NNM on
each datasets. The most surprising fact is that the rank used by WSST
is much smaller than the one used by NNM, while leading at the same
time to strong prediction improvements. For \texttt{movie-1M} for
instance, the prediction error of WSST is $30\%$ better than NNM,
while NNM solution has rank 200 and the WSST has rank 40. Once again,
we can conclude on this example that WSST gives both much simpler
reconstructions, and better prediction accuracy. Note that we
considered a maximum rank equal to 200 for the \texttt{movie-100K} and
\texttt{movie-1M} datasets, and equal to 50 for \texttt{movie-10M} (to
make this problem computationally tractable on a normal computer).

\begin{table}[htbp]
  \centering
  \footnotesize
  \begin{tabular}{lcccccc}
    \multicolumn{7}{l}{\rule{12.5cm}{1pt}} \\
    &  &  & \multicolumn{2}{c}{relative error} &
    \multicolumn{2}{c}{rank} \\
    & $n_1 / n_2$ & $m$ & \multicolumn{2}{l}{\rule{2.9cm}{0.5pt}} &
    \multicolumn{2}{l}{\rule{2.6cm}{0.5pt}} \\    
    & & & NNM & WSST & NNM & WSST\\    
    \texttt{movie-100K:} & 943/1682 & 1.00e+5 & 3.92e-01 
    & 3.30e-01 & 128 & 33 \\
    \texttt{movie-1M:} & 6040/3702 & 1.00e+6 & 3.83e-01 & 2.70e-01
    & 200 & 40 \\
    \texttt{movie-10M:}  & 71567/10674 & 9.91e+6 & 2.76e-01
    & 2.36e-01 & 50 & 5 \\
    \multicolumn{7}{l}{\rule{12.5cm}{1pt}}
  \end{tabular}
  \caption{Relative reconstruction errors for the MovieLens datasets.}
  \label{tab:collaborative}
\end{table}

\section{Proofs}
\label{sec:proofs}

\subsection{Proofs for Section~\ref{sec:cs-vectors}}
\label{sec:proofs-cs}

We denote by $\ell_p^M$ the space $\R^M$ endowed with the $\ell_p$
norm. The unit ball there is denoted by $B_p^M$. We also denote the
unit Euclidean sphere in $\R^M$ by $\cS^{M-1}$. We denote by
$(e_1,\ldots,e_N)$ the canonical basis of $\R^N$ and for any
$I\subset\{1,\ldots,N\}$ denote by $\R^I$ the subspace of $\R^N$
spanned by $(e_i:i\in I)$.  Let $A = [A_{\{1\}}, \ldots, A_{\{N\}}]$
be a matrix from $\R^N$ to $\R^m$, where $A_{\{i\}}$ denotes the
$i$-th column vector of $A$. Let $x \in \R^N$ and $I$ an arbitrary
subset of $\{1,\ldots,N\}$. We define $A_I = [A_{\{i\}} : i \in I]$
the matrix from $\R^{I}$ to $\R^m$ with columns vectors $A_{\{i\}}$
for $i\in I$. We denote by $x_I$ the vector in $\R^{I}$ with
coordinates $x_i$ for $i\in I$, where $x_i$ is the $i$-th coordinate
of $x$. We denote by $x^I$ the vector of $\R^N$ such that $x_i^I=0$
when $i\notin I$ and $x_i^I=x_i$ when $i\in I$.  If $w\in\R^N$ has non
negative coordinates, we denote by $wx$ the vector $(w_1
x_1,\ldots,w_N x_N)$ and by $x/w$ the vector
$(x_1/w_1,\ldots,x_N/w_N)$ with the previous convention in case where
$w_i=0$ for some $i$. We denote by $|x|$ the vector
$(|x_1|,\ldots,|x_N|)$. The support of $x$ is denoted by $I_x$, this
is the set of all $i\in\{1,\ldots,N\}$ such that $x_i\neq0$. We also
consider the $w$-weighted $\ell_1^N$-norm
 \begin{equation}
   \label{eq:l1weightednorm}
   |x|_{1,w} = \sum_{i=1}^N \frac{|x_i|}{w_i}.
\end{equation}
Note that $|\cdot|_{1,w}$ is a norm only when restricted to
$\R^{I_w}$, where $I_w$ is the support of $w$.

We start with the well-known null space property and dual
characterization~\cite{MR2236170} of exact reconstruction of a vector
by $\ell_1$-based algorithms.

\begin{proposition}
  \label{prop:equivalence-reconstruction-exacte}
  Let $x, w \in \R^N$ and denote by $I_x$ \textup(resp. $I_w$\textup)
  the support of $x$ \textup(resp. $w$\textup). The following points
  are equivalent\textup:
  \begin{enumerate}
  \item $\Delta_w(Ax) = x$,
  \item $I_x \subset I_w$ and for any $h \in \ker A_{I_w}$ such that
    $h \neq 0$ then
    \begin{equation*}
      \Norm{\Big(\frac{h}{w_{I_w}}\Big)_{I_x^\complement}}_1 +
      \Inr{\sgn(x_{I_x}), \Big( \frac{h}{w_{I_w}}\Big)_{I_x}} > 0,
    \end{equation*}
  \item $I_x\subset I_w$ and there exists $Y \in (\ker
    A_{I_w})^{\bot}$ such that $(w_{I_w}Y)_{I_x}={\rm sign}(x_{I_x})$
    and $|(w_{I_w}Y)_{I_x^\complement}|_\infty < 1$.
  \end{enumerate}
\end{proposition}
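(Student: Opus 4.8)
The plan is to establish $(1)\Leftrightarrow(2)$ and $(2)\Leftrightarrow(3)$, and to begin by reducing everything to classical (unweighted) Basis Pursuit. Each of the three statements contains, or in the case of $(1)$ forces, the inclusion $I_x\subset I_w$: by~\eqref{eq:values-weight-algo} the vector $\Delta_w(Ax)$ vanishes on $I_w^\complement$, so $\Delta_w(Ax)=x$ is impossible unless $I_x\subset I_w$. Under this inclusion $Ax=A_{I_w}x_{I_w}$, and, again by~\eqref{eq:values-weight-algo}, statement $(1)$ says precisely that $x_{I_w}$ is the unique minimizer of $t\mapsto|t|_{1,w_{I_w}}$ on $\{t\in\R^{I_w}:A_{I_w}t=A_{I_w}x_{I_w}\}$. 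Since the entries of $w_{I_w}$ are positive, the change of variable $u=t/w_{I_w}$ turns this into $\min\{|u|_1:Bu=Bz\}$ with $B:=A_{I_w}\diag(w_{I_w})$ and $z:=x_{I_w}/w_{I_w}$; it carries $\ker A_{I_w}$ onto $\ker B$ via $h\mapsto h/w_{I_w}$ and $(\ker A_{I_w})^\perp$ onto $\im B^\top$ via $Y\mapsto w_{I_w}Y$, and leaves the sign pattern and support unchanged, $\sgn(z_{I_x})=\sgn(x_{I_x})$ and $\supp(z)=I_x$. So it suffices to prove the equivalences for $B$, $z$ and $I:=I_x$.

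For $(1)\Leftrightarrow(2)$, which is the (strong) null space property, I would argue by perturbation. If $z$ is the unique minimizer, then for $g\in\ker B\setminus\{0\}$ the point $z+\eps g$ is feasible, and for $\eps>0$ small the expansion $|z_i+\eps g_i|=|z_i|+\eps\,\sgn(z_i)g_i$ on $I$ together with $|z_i+\eps g_i|=\eps|g_i|$ off $I$ gives $|z+\eps g|_1=|z|_1+\eps\big(\Inr{\sgn(z_I),g_I}+|g_{I^\complement}|_1\big)$, which must exceed $|z|_1$; this is $(2)$. Conversely, if $(2)$ holds and $t\neq z$ is feasible, put $g:=t-z\in\ker B\setminus\{0\}$; the subgradient inequality $|z_i+g_i|\geq|z_i|+\sgn(z_i)g_i$ on $I$, with $|g_i|\geq0$ off $I$, yields $|t|_1\geq|z|_1+\Inr{\sgn(z_I),g_I}+|g_{I^\complement}|_1>|z|_1$, so $z$ is the unique minimizer. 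Applying $(2)$ to $g$ and to $-g$ also shows $\ker B\cap\R^{I}=\{0\}$, i.e.\ $B$ is injective on $\R^{I}$ — a fact used in the next step.

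For $(2)\Leftrightarrow(3)$, after the reduction $(3)$ reads: there is $\eta\in\im B^\top$ with $\eta_I=\sgn(z_I)$ and $|\eta_{I^\complement}|_\infty<1$ (then $Y:=\eta/w_{I_w}$ gives back the statement for $A$). The direction $(3)\Rightarrow(2)$ is H\"older's inequality: for $g\in\ker B\setminus\{0\}$ one has $\Inr{\eta,g}=0$, hence $\Inr{\sgn(z_I),g_I}=-\Inr{\eta_{I^\complement},g_{I^\complement}}\geq-|\eta_{I^\complement}|_\infty|g_{I^\complement}|_1>-|g_{I^\complement}|_1$ when $g_{I^\complement}\neq0$, while $g_{I^\complement}=0$ is excluded by injectivity of $B$ on $\R^{I}$. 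For $(2)\Rightarrow(3)$ I would invoke convex duality: the null space property forces the convex, positively homogeneous map $g\mapsto|g_{I^\complement}|_1+\Inr{\sgn(z_I),g_I}$ to be strictly positive on the compact unit sphere of $\ker B$, and a strict separation (Hahn--Banach) argument — equivalently, linear-programming duality applied to $\min\{|u|_1:Bu=Bz\}$ together with injectivity of $B$ on $\R^I$ — then produces the certificate $\eta$ in the row space of $B$ with the two required properties.

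The step I expect to be the real obstacle is $(2)\Rightarrow(3)$: reading the subgradient/KKT conditions off an optimal point only yields a certificate with the \emph{non-strict} bound $|\eta_{I^\complement}|_\infty\leq1$, and upgrading this to the strict inequality $<1$ — which is exactly what distinguishes \emph{unique} recovery from mere optimality — is where the compactness/separation argument, combined with the injectivity of $B$ on $\R^I$ extracted in the first step, is genuinely needed.
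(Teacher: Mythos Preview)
Your reduction to unweighted Basis Pursuit via $u=t/w_{I_w}$, $B=A_{I_w}\diag(w_{I_w})$, $z=x_{I_w}/w_{I_w}$ is correct and a bit cleaner than the paper, which keeps the weighted norm $|\cdot|_{1,w}$ throughout. The paper organizes the argument as the cycle $2\Rightarrow1$, $3\Rightarrow2$, $1\Rightarrow3$; your $2\Rightarrow1$ (subgradient inequality) and $3\Rightarrow2$ (orthogonality plus H\"older) are the same computations, and your extra $1\Rightarrow2$ via first-order expansion is fine.

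For the construction of the strict dual certificate the paper goes $1\Rightarrow3$ rather than your $2\Rightarrow3$, and it supplies the concrete mechanism you are groping for. From uniqueness one has $|x|_{1,w}B^N_{1,w}\cap(x+\ker A)=\{x\}$; Hahn--Banach separates the ball from the affine subspace by a hyperplane $\Gamma_1=\{t:\inr{t,Y}=1\}$ containing $x$. The key geometric step is then to \emph{move} $\Gamma_1$ so that its contact set with the ball lies inside the face $F_{1,w}(x)=|x|_{1,w}\conv(w_ie_i:i\in I)$ through $x$. The equality case of H\"older gives $(wY)_I=\sgn(x_I)/|x|_{1,w}$, and for $i\notin I$ the extreme point $|x|_{1,w}w_ie_i$ is not in that face, hence not on $\Gamma_1$, which is exactly the strict inequality $|(wY)_{I^\complement}|_\infty<1$. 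This ``push the supporting hyperplane into the right face'' device is the missing idea you flag in your last paragraph; your LP-duality route can be completed along the same lines, but you have not actually carried it out.

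There is, however, a genuine circularity in your $(3)\Rightarrow(2)$: you dispose of the case $g_{I^\complement}=0$ by invoking injectivity of $B$ on $\R^{I}$, but you extracted that injectivity \emph{from} $(2)$. The paper's $3\Rightarrow2$ has the same hidden flaw --- its final sum $\sum_{i\in I^\complement}(h_i/w_i)(\sgn(h_i)-w_iY_i)$ is only $\geq0$ and vanishes when $h_{I^\complement}=0$. In fact $(3)$ alone does not force injectivity on $\R^{I}$: with $B=[1\ 1]$, $z=(1,1)$, $I=\{1,2\}$, the vector $\eta=(1,1)\in\im B^\top$ satisfies $\eta_I=\sgn(z_I)$ and the constraint on $I^\complement=\varnothing$ is vacuous, yet $z$ is not the unique $\ell_1$-minimizer on $\{t:Bt=2\}$. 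The standard remedy is to append injectivity of $A_{I_x}$ (equivalently of $B$ on $\R^{I}$) to condition $(3)$; with that amendment both your argument and the paper's go through.
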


\begin{proof}
  It follows from~\eqref{eq:values-weight-algo} that, under each one
  of the three conditions, we have $I_x\subset I_w$. Therefore, to
  simply notations, we can work as if the ambient space were
  $\R^{I_w}$. Hence, without loss of generality, we assume that
  $\R^{I_w}=\R^N$. We also denote by $I=I_x$ the support of $x$.
  
  [Point~2. entails Point~1.] Using standard arguments (see for
  instance~\cite{MR0274683}), we can see that the subgradient of
  $|\cdot|_{1,w}$ at $x \in \R^N$ is the set
  \begin{equation}
    \label{eq:subgradient-weighted-norm}
    \begin{split}
      \partial |x|_{1,w} = \big \{t \in \R^N : t_i &=  \sgn(x_i)/w_i
      \text{ when } x_i\neq0 \\
      &\text{ and } |t_i|\leq 1/w_i \text{ when }
      x_i=0\big\}.
    \end{split}  
  \end{equation}
  Using the definition of the subgradient of $|\cdot|_{1,w}$ at $x$,
  it follows that for any $h \in \R^N$,
  \begin{equation*}
    |x+h|_{1,w} \geq |x|_{1,w} + |(h/w)_{I^\complement}|_1 +
    \inr{\sgn(x_I),(h/w)_I}.
  \end{equation*}
  Thus, if Point~2 holds then for any $h \in \ker A$ such that $h \neq
  0$,
  \begin{equation*}
    |x+h|_{1,w} > |x|_{1,w}
  \end{equation*}
  and thus Point~1 is satisfied.
  
  [Point~3. entails Point~2.] Let $Y \in (\ker A)^{\bot}$ such that
  $(wY)_I=\sgn(x_I)$ and $|(wY)_{I^\complement}|_\infty<1$.  For any $h\neq 0$
  in $\ker A$, we have
  \begin{align*}
    |(h/w)_{I^\complement}|_1 & + \inr{\sgn(x_I),(h/w)_I} =
    \inr{\sgn(x)^I+\sgn(h)^{I^\complement}, h/w} \\
    &= \inr{(\sgn(x)/w)^I + (\sgn(h)/w)^{I^\complement}, h} \\
    &= \inr{(\sgn(x)/w)^I + (\sgn(h)/w)^{I^\complement} - Y,h} \\
    &= \inr{(\sgn(h)/w)_{I^\complement}-Y_{I^\complement},h_{I^\complement}}
    = \sum_{i\in I^\complement}\frac{h_i}{w_i}
    \big(\sgn(h_i)-w_iY_i\big)>0,
  \end{align*}
  where we used Point~3 in the fourth inequality.
  
  [Point~1. entails Point~3.]  This follows from classical results on
  the minimization of a convex function over a convex set
  (cf.~\cite{MR0274683}). Nevertheless, we provide a direct proof
  following the argument of \cite{MR2236170}. Denote by $\{ e_1,
  \ldots, e_N \}$ the canonical basis in $\R^N$ and by $B_{1,w}^N$ the
  unit ball associated to the $w$-weighted $\ell_1^N$-norm:
  \begin{equation}
    \label{eq:weighted-l1-norm}
    B^N_{1,w} = \{t \in\R^N : |t|_{1,w} \leq 1 \}.
  \end{equation}
  If $x$ is the unique solution of~\eqref{eq:general-weighted-algo}
  then $|x|_{1,w} B_{1,w}^N \cap (x + \ker A ) = \{ x \}$. Then by a
  duality argument (for instance Hahn-Banach Theorem for the
  separation of convex sets), there exists $Y \in \R^N$ such that $x +
  \ker A \subset \Gamma_1$, where $\Gamma_1 = \{t : \inr{t,Y} = 1 \}$
  and $|x|_{1,w} B_{1,w}^N \subset \Gamma_{\leq 1}$, where
  $\Gamma_{\leq 1} =\{t : \inr{t,Y} \leq 1 \}$. Introduce $F_{1,w}(x)
  = |x|_{1,w} \conv(w_ie_i : x_i\neq 0)$, the face of $|x|_{1,w}
  B_{1,w}^N$ containing $x$. By moving the hyperplan $\Gamma_1$, we
  can assume that $|x|_{1,w} B_{1,w}^N \cap \Gamma_1 \subset
  F_{1,w}(x)$. Since $|x|_{1,w} B_{1,w}^N\subset
  \Gamma_{\leq1}$, we have $\sup_{t\in |x|_{1,w}
    B_{1,w}^N}\inr{t,Y}\leq 1$ thus
  $|(wY)|_\infty\leq1/|x|_{1,w}$. Moreover, $x \in \Gamma_1$ so $1 =
  \inr{x,Y} \leq |x|_{1,w} |(wY)|_\infty \leq 1$ because
  $|(wY)|_\infty \leq 1 / |x|_{1,w}$. This is the equality case in
  H{\"o}lder's inequality, so it follows that $(wY)_I = \sgn(x_I) /
  |x|_{1,w}$. Then, for any $i \notin I$, $|x|_{1,w}w_ie_i\in
  |x|_{1,w} B^N_{1,w}$, thus $\inr{|x|_{1,w}w_ie_i,Y} \leq 1$ and
  $|x|_{1,w}w_ie_i\notin F_{1,w}(x)$, so $|x|_{1,w} w_i e_i\notin
  \Gamma_1$ thus $\inr{|x|_{1,w}w_ie_i,Y}< 1$. That is,
  $|(wY)_{I^\complement}|_\infty < 1/|x|_{1,w}$. Finally, for any
  $h\in\ker A$, $1=\inr{x+h,Y}=\inr{x,Y}+\inr{h,Y}=1+\inr{h,Y}$, thus
  $\inr{h,Y} = 0$ and $Y \in (\ker A)^\bot$. Then, we normalize $Y$ by
  $|x|_{1,w}$ to obtain Point~3.
\end{proof}

Both Criterions~2 and~3 in
Proposition~\ref{prop:equivalence-reconstruction-exacte} can be used
to characterize the exact reconstruction of a vector $x$ by the
$\ell_1$-weighted algorithm. The vector $Y$ of Criterion~3 is now
called an \emph{exact dual certificate}
(cf. \cite{MR2236170,Gross}). We will use Criterion~3 and the
construction of an exact dual certificate from~\cite{MR2236170} to
prove Theorems~\ref{thm:A} and~\ref{thm:B}. Note that Criterion~2
together with the construction of an \emph{inexact dual certificate}
(cf.~\cite{Gross}) can also be used. Nevertheless, we do not present
this construction here since it does not improve the statement of
Theorem~\ref{thm:B}.

\subsubsection{Proof of Theorem~\ref{thm:A}}
\label{sec:TheoA}
  In the same way as we did in the proof of
  Proposition~\ref{prop:equivalence-reconstruction-exacte}, we can
  work as if the ambient space were $\R^{I_w}$ and assume, without
  loss of generality, that $\R^{I_w}=\R^N$. We denote by $I$ the
  support of $x$. We prove first that when $\Delta_1(Ax) = x$, then
  $A_I$ is injective. Indeed, suppose that there exists some
  $h\in\R^{I}$ such that $h\neq0$ and $A_Ih=0$. Denote by $h^0 \in
  \R^N$ the vector such that $h^0_I = h$ and
  $h^0_{I^\complement}=0$. We have $h^0 \neq 0$ and $A h^0 = A_I h_I =
  0$. In particular, for any $\lambda \neq 0$, $\lambda h \in \ker A -
  \{0\}$. Therefore, since $x$ is the unique solution of the Basis
  Pursuit algorithm, it follows from Point~2 of
  Proposition~\ref{prop:equivalence-reconstruction-exacte} (applied to
  the weight vector $w=(1,\ldots,1)$), that, for every $\lambda \neq
  0$, $\inr{\sgn(x_I), \lambda h^0_I} > 0.$ This is not possible, so
  $A_I$ is injective.
  
  Since $\Delta_1(Ax)=x$, the decoder $\Delta_2$ is given here by
  \begin{equation*}
    \Delta_2(Ax) \in \argmin_{t \in \R^N} \Big(\sum_{i=1}^N
    \frac{|t_i|}{|x_i|} : At = Ax \Big).
  \end{equation*}
  Therefore, according to~\eqref{eq:values-weight-algo}, we have
  $\Delta_2(Ax)_i = 0$ for any $i \notin I$, that is ${\rm
    supp}(\Delta_2(Ax))\subset I$. As a consequence
  $A_Ix_I=Ax=A\Delta_2(Ax)=A_I\Delta_2(Ax)_I$ and $A_I$ is injective
  thus, $x_I=\Delta_2(Ax)_I$. Since $x_{I^\complement} = 0 =
  \Delta_2(Ax)_{I^\complement}$, we have $x = \Delta_2(Ax)$.

\subsubsection{Proof of Theorem~\ref{thm:B}}
\label{sec:TheoB}

We adapt to our setup the ``dual certificate'' introduced
in~\cite{MR2236170} and consider
\begin{equation}
  \label{eq:exact-dual-certificate}
  Y^0 = A^\top A_I (A_I^\top A_I)^{-1} \Big( \frac{\sgn(x)}{w} \Big)_I.
\end{equation}
In particular, we have $Y^0 \in \im (A^\top) = (\ker A)^\bot$ and
\begin{equation*}
  Y_I^0 = A^\top_I A_I(A_I^\top A_I)^{-1}
  \Big(\frac{\sgn(x)}{w}\Big)_I = \Big(\frac{\sgn(x)}{w}\Big)_I.
\end{equation*}
Thus, we have $(wY^0)_I={\rm sgn}(x_I)$. In view of
Proposition~\ref{prop:equivalence-reconstruction-exacte}, it only
remains to prove that $|(wY^0)_{I^\complement}|<1$ with high
probability.  For $0 < \delta < 1$ and $\mu > 0$, we consider the
events
 \begin{equation}
   \label{eq:Omega0}
   \Omega_0(I,\delta) = \big\{(1-\delta) |y|_2^2\leq |A_I y|_2^2 \leq
   (1 + \delta) |y|_2^2, \quad \forall y \in \R^{I} \big\}
 \end{equation}
 and
 \begin{equation}
   \label{eq:Omega1}
   \Omega_1(I,\mu) = \big\{\max_{i\in I^\complement} |A_I^\top  A_{\{i\}}|_2 <
   \mu \big\}.
 \end{equation}
 First, note that since $A_I^\top A_I - Id$ is Hermitian, we have
 \begin{equation*}
   \norm{A_I^\top A_I - Id}_{2 \rightarrow 2} = \sup_{|y|_2=1}
   \big| |A_I y|_2^2 - 1 \big|.
 \end{equation*}
 Thus, on $\Omega_0(I,\delta)$, we have $ \norm{A_I^\top A_I
   -Id}_{2\rightarrow2}\leq \delta$ and so for any $y\in\R^{I}$,
 $\big| (A_I^\top A_I)^{-1} y \big|_2 \leq (1 - \delta)^{-1}
 |y|_2$.  In
 particular,
 \begin{equation*}
   \label{eq:Intermediare-0}
   \Big|(A_I^\top A_I)^{-1} \Big(\frac{\sgn(x)}{w}\Big)_I \Big|_2 \leq
   \frac{1}{1 - \delta} \Big| \Big(\frac{\sgn(x)}{w}\Big)_I \Big|_2 =
   \frac{1}{1 - \delta} \big| \big( 1 / w \big)_I \big|_2.
 \end{equation*}
 Then, it follows that, on $\Omega_0(I, \delta) \cap \Omega_1(I, x)$
 and under condition $(A0)(I, (1-\delta) / \mu)$,
 \begin{align*}
   |(wY^0)_{I^\complement}|_{\infty} &= \max_{i\in I^\complement}
   \Big| w_i A_{\{i\}}^\top
   A_I (A_I^\top A_I)^{-1} \Big(\frac{\sgn(x)}{w}\Big)_I \Big| \\
   &\leq \max_{i\in I^\complement} w_i \max_{i\in I^\complement}
   \Big|\inr{A_I^\top
     A_{\{i\}}, (A_I^\top A_I)^{-1} \Big(\frac{\sgn(x)}{w}\Big)_I}\Big| \\
   &\leq \max_{i\in I^\complement} w_i \max_{i\in I^\complement}
   \big|A_I^\top
   A_{\{i\}}\big|_2 \Big|(A_I^\top A_I)^{-1}
   \Big(\frac{\sgn(x)}{w}\Big)_I\Big|_2 \\
   &< \frac{\mu}{1-\delta} \max_{i\in I^\complement} w_i
   \big|\big(1/w\big)_I\big|_2 \leq 1.
 \end{align*}
 Then, Theorem~\ref{thm:B} follows from the probability estimates of
 $\Omega_0(I,\delta) \cap \Omega_1(I,\mu)$ provided in the next lemma.

 \begin{Lemma}
   \label{lem:proba-estimates}
   Let $A = m^{-1/2}\big(g_{i, j}\big)$ be a $m\times N$ matrix where the
   $g_{i, j}$'s are i.i.d. standard Gaussian variables. Assume that
   \begin{equation*}
     m \geq c_0\max\Big[ \frac{s}{\delta^2}, \frac{s\log N}{ \mu^2} \Big].
   \end{equation*}
   With probability larger than $1 - 2\exp(-c_1 m \delta^2) -
   \exp(- c_2\mu^2 m /s)$, we have
   \begin{equation*}
     (1 - \delta) |y|_2^2 \leq |A_I y|_2^2 \leq (1 + \delta) |y|_2^2,
     \quad \forall y \in \R^I
  \end{equation*}
  and $\max_{i\in I^\complement} |A_I^\top A_{\{i\}}|_2 < \mu.$
\end{Lemma}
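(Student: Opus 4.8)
The plan is to establish the two assertions separately and combine them by a union bound, the key structural fact being that the columns of $A$ indexed by $I$ are independent of those indexed by $I^\complement$. For the event $\Omega_0(I,\delta)$ of \eqref{eq:Omega0} I would run the classical $\epsilon$-net argument. Write $A_I = m^{-1/2}G_I$ with $G_I$ an $m\times s$ matrix of i.i.d.\ $N(0,1)$ entries; for a fixed unit vector $y\in\R^I$, the quantity $m\,|A_I y|_2^2$ is a $\chi^2_m$ variable, so a standard $\chi^2$-tail estimate gives $\P\bigl(\bigl|\,|A_I y|_2^2-1\,\bigr|>t\bigr)\le 2\exp(-cmt^2)$ for $0<t\le 1$. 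Fix a $1/4$-net $\cN$ of the sphere $\cS^{I-1}$ with $|\cN|\le 9^s$; a union bound over $\cN$ at level $t=\delta/2$ shows that with probability at least $1-2\cdot 9^s\exp(-cm\delta^2/4)$ one has $\bigl|\,|A_I y|_2^2-1\,\bigr|\le\delta/2$ on $\cN$, and a routine approximation argument (using that $A_I^\top A_I-\mathrm{Id}$ is symmetric, so its operator norm is controlled by its values on a sufficiently fine net) upgrades this to $\bigl|\,|A_I y|_2^2-1\,\bigr|\le\delta$ for every unit $y$, hence for all $y\in\R^I$ by homogeneity. Since $m\ge c_0 s/\delta^2$, choosing $c_0$ large absorbs the entropy term $s\log 9$ and leaves a failure probability at most $2\exp(-c_1 m\delta^2)$.

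For the incoherence bound I would estimate $\Omega_1(I,\mu)^\complement\cap\Omega_0(I,\delta)$ by conditioning on $A_I$. Fix $i\in I^\complement$; then $A_{\{i\}}$ is independent of $A_I$, and conditionally on $A_I$ the vector $v_i:=A_I^\top A_{\{i\}}$ is centered Gaussian in $\R^I$ with covariance $m^{-1}A_I^\top A_I$. On $\Omega_0(I,\delta)$ we have $\|A_I\|_{2\to2}^2\le 1+\delta\le 2$ and $\|A_I\|_{\mathrm{HS}}^2\le(1+\delta)s$, so $\E\bigl[\,|v_i|_2\mid A_I\,\bigr]\le\bigl(m^{-1}\|A_I\|_{\mathrm{HS}}^2\bigr)^{1/2}\le\sqrt{(1+\delta)s/m}$ and the map $g\mapsto m^{-1}|G_I^\top g|_2$ is $\sqrt{(1+\delta)/m}$-Lipschitz. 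Gaussian Lipschitz concentration then gives $\P\bigl(|v_i|_2\ge\sqrt{(1+\delta)s/m}+t\mid A_I\bigr)\le\exp\bigl(-mt^2/(2(1+\delta))\bigr)$ on $\Omega_0$. Since $m\ge c_0 s\log N/\mu^2$, for $c_0$ large the deterministic term $\sqrt{(1+\delta)s/m}$ is at most $\mu/2$; taking $t=\mu/2$ and a union bound over the at most $N$ indices $i\in I^\complement$ shows that $\max_{i\in I^\complement}|v_i|_2<\mu$ fails on $\Omega_0$ with probability at most $N\exp(-cm\mu^2)=\exp(\log N-cm\mu^2)$, which, using $m\ge c_0 s\log N/\mu^2$ once more (and $s\ge 1$), is at most $\exp(-c_2\mu^2 m/s)$.

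Finally I would combine the two estimates as $\P\bigl(\Omega_0(I,\delta)^\complement\cup\Omega_1(I,\mu)^\complement\bigr)\le\P(\Omega_0^\complement)+\P(\Omega_1^\complement\cap\Omega_0)\le 2\exp(-c_1 m\delta^2)+\exp(-c_2\mu^2 m/s)$, which is exactly the claimed probability. The steps requiring care are the passage from the net estimate to the bound on the full sphere---this is why one nets at level $\delta/2$ with a net of cardinality $9^s$ rather than something cruder---and the bookkeeping that fixes $c_0$ large enough to dominate simultaneously the entropy term $s\log 9$ in the first part and the term $\log N$ in the second; beyond that the argument is a routine application of $\chi^2$ tail bounds and Gaussian concentration, so I do not expect a genuine obstacle.
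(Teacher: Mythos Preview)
Your proposal is correct. The first half (the isometry on $\R^I$) is essentially identical to the paper's argument: both run the standard $1/4$-net of cardinality $9^s$ on the sphere $\cS^{I-1}$, use a sub-exponential tail for $|A_I y|_2^2-1$ at a fixed $y$ (you phrase it as a $\chi^2$ tail, the paper as Bernstein for $\psi_1$ variables), take a union bound, and pass from the net to the full sphere via the symmetry of $A_I^\top A_I-\mathrm{Id}$.

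The second half differs in method. The paper bounds $\P\big[|A_I^\top A_{\{i\}}|_2\ge\mu\big]$ \emph{unconditionally} by a moment computation: Markov at exponent $q$, the vectorial Khintchine inequality conditionally on the rows $G_{1I},\ldots,G_{mI}$, then integration, and finally an optimization over $q$ yielding $\exp(-c\,\mu^2 m/s)$ for each $i$. You instead condition on $A_I$, restrict to the good event $\Omega_0(I,\delta)$, observe that $v_i=A_I^\top A_{\{i\}}$ is then Gaussian with controlled covariance, and apply Gaussian Lipschitz concentration. Your route is arguably cleaner and actually produces a tail $\exp(-c\,m\mu^2)$ (no $1/s$) on $\Omega_0$, which you then relax to $\exp(-c_2\mu^2 m/s)$ to match the statement; the price is that the incoherence estimate is only proved on $\Omega_0$, forcing the decomposition $\P(\Omega_0^\complement\cup\Omega_1^\complement)\le\P(\Omega_0^\complement)+\P(\Omega_1^\complement\cap\Omega_0)$, whereas the paper's moment argument handles $\Omega_1$ independently of $\Omega_0$. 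Both approaches are standard and yield the same conclusion with the same dependence on the parameters.
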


\begin{proof}
  For the sake of completeness, we recall here the classical
  $\eps$-net argument to prove the first statement of
  Lemma~\ref{lem:proba-estimates}. It is enough to prove that
  $\sup_{y\in \cS^I} | |A_Iy|_2^2 - 1 | \leq \delta$, where $\cS^I$ is
  the set of unit vectors of $\ell_2^N$ supported on $I$. First, note
  that
  \begin{equation*}
    \sup_{y \in \cS^I} \big| |A_Iy|_2^2 - 1 \big| = \sup_{y \in \cS^I}
    | \inr{Ty, y} | = \norm{T}_{2\rightarrow2},
  \end{equation*}
  where $T : \R^I \rightarrow \R^I$ is the symmetric operator $A^\top
  A - I_d$. Let $\Lambda \subset \cS^I$ be a $1/4$-net of $ \cS^I$ for
  the $\ell_2$ metric with a cardinality smaller than $9^s$ (the
  existence of such a net follows from a volumetric argument,
  see~\cite{MR1036275}). For any $y\in\cS^I$, there exists $z \in
  \Lambda$ such that $y = z + u$ with $\Norm{u}_2 \leq
  1/4$ and  therefore,
  \begin{equation*}
    |\inr{Ty,y}|\leq |\inr{Tz,z}|+|\inr{Tu,u}|+2|\inr{Tz,u}|\leq
    \max_{z\in\Lambda}|\inr{Tz, z}|+\frac{9\norm{T}_{2\rightarrow2}}{16}.
  \end{equation*}
   Hence, $\norm{T}_{2 \rightarrow
    2}\leq (16/7) \max_{z \in \Lambda}|\inr{Tz, z}|$, and it is enough
  to control the supremum of $y \rightarrow |\inr{Ty,y}|$ over
  $\Lambda$ instead of $\cS^I$.
  
  Let $y\in\Lambda$. We denote by $G_1 / \sqrt{m}, \ldots, G_m /
  \sqrt{m}$ the row vectors of $A$ where $G_1, \ldots, G_m$ are $m$
  independent standard Gaussian vectors of $\R^N$. We have $\inr{Ty,
    y} = m^{-1}\sum_{i=1}^m \inr{G_i,y}^2 - 1$. Since $\| \inr{G,y}^2
  \|_{\psi_1} = \| \inr{G,y} \|_{\psi_2}^2$, it follows from Bernstein
  inequality for $\psi_1$ random variables~\cite{vanderVaartWellner}
  that
  \begin{equation*}
    \P\big[ |\inr{Ty,y}| \leq \delta\big] \geq 1 - 2\exp(-c_1 m\delta^2),
  \end{equation*}
  and a union bound yields 
  \begin{equation*}
    \P\big[ |\inr{Ty,y}| \leq \delta \; , \; \forall y \in \Lambda\big] \geq
    1 - 2\exp( s \log 9 - c_1m \delta^2). 
  \end{equation*}
  Combining the $\eps$-net argument with this probability estimate we
  obtain that when $m \geq c_2 s/\delta^2$ then
  $\norm{T}_{2\rightarrow2}\leq \delta$ with probability at least $1 -
  2\exp\big( -c_3m\delta^2\big)$.
  
  Now, we turn to the second part of the statement. Let $i\in
  I^\complement$. The $i$-th column vector of $A$ is
  $A_{\{i\}}=G_i/\sqrt{m}=(g_{i1},\ldots,g_{im})^\top/\sqrt{m}$ where
  the $G_i$'s are independent standard Gaussian vectors of $\R^m$. Let
  $q \geq 2$ to be chosen later. By Markov inequality,
  \begin{equation}
    \label{eq:markov-theoB}
    \Pro\Big[\Big|A_I^\top A_{\{i\}} \Big|_2\geq
    \mu\Big]=\Pro\Big[\Big|\sum_{j=1}^m g_{ij}G_{jI}\Big|_2\geq
    m\mu\big]\leq (m\mu)^{-q}\E\Big|\sum_{j=1}^mg_{ij}G_{jI}\Big|_2^q.
  \end{equation}
  Now, we use the vectorial version of Khintchine inequality
  conditionally to $G_{1J},\ldots,G_{mJ}$, to obtain, for some
  absolute constant $c_4$,
  \begin{equation*}
    \Big(\E_g\Big|\sum_{j=1}^mg_{ij}G_{jI}\Big|_2^q\Big)^{1/q}\leq
    c_4\sqrt{q}\Big(\E_g\Big|\sum_{j=1}^mg_{ij}G_{jI}\Big|_2^2\Big)^{1/2}=c_4
    \sqrt{q}\Big(\sum_{j=1}^m\big| G_{jI}\big|_2^2\Big)^{1/2}.
  \end{equation*}
  It follows that
  \begin{equation*}
    \E\Big|\sum_{j=1}^mg_{ij}G_{jI}\Big|_2^q\leq \big(c_4^2 q m s\big)^{q/2}.
  \end{equation*}
  Hence, in (\ref{eq:markov-theoB}) for
  $q=\big(\mu/(2c_4^2)\big)^2(m/s)$, we obtain
  \begin{equation*}
    \Pro\Big[\Big|A_I^\top A_{\{i\}} \Big|_2\geq \mu\Big]\leq
    \exp\Big(-\frac{\mu^2 m \log 2}{s(2c_4^2)^2}\Big).
  \end{equation*}
  The result follows now from an union bound.
\end{proof}

\subsubsection{Proof of Theorem~\ref{thm:C}}
\label{sec:TheoC}

\begin{proof}
  Assume that $\Delta_r(Ax) = x$ and define $y = \Delta_{r+1}(Ax)$. By
  construction of $y$, we have $\supp(y) \subset \supp(x)$ and $Ax =
  Ay$. So, since $A$ is injective on $\Sigma_m$ and $x-y \in
  \Sigma_m$, we have $x=y$. This proves that $\Delta_{r+1}(Ax) = x$,
  and that the sequence $(\Delta_n(Ax))_n$ is constant and equal to a
  $\lfloor m/2 \rfloor$-sparse vector starting from the $r$-th
  iteration.
  
  Now, assume that there exists an integer $r$ and $y \in
  \Sigma_{\lfloor m/2 \rfloor}$ such that $\Delta_r(Ax) =
  \Delta_{r+1}(Ax) = \cdots=y$. In particular, we have $Ay = Ax$, so
  since $A$ is injective on $\Sigma_m$ and $x-y \in \Sigma_m$, we have
  $x=y$.
\end{proof}

\subsection{Proofs for Section~\ref{cs-matrices}}
\label{sec:proofs-cs-matrices}

The next proposition shows that weighted spectral soft-thresholding
achieves the minimum of the weighted nuclear norm plus a proximity
term. Note that, however, weighted spectral soft-thresholding is not a
proximal operator, since the weighted nuclear norm is not convex. This
entails in particular that the proofs below use a direct analysis,
since we cannot use arguments based on subdifferential computations
here.
\begin{proposition}
  \label{prop:weighted-nuclear-prox}
  Let $B \in \R^{n_1 \times n_2}$, $\tau,\lambda \geq 0$ and $w_1 \geq \cdots
  \geq w_{n_1 \wedge n_2} \geq 0$. Then the minimization
  problem
  \begin{equation*}
    \min_{A \in \R^{n_1 \times n_2}} \Big\{ \frac 12 \norm{A - B}_2^2 +
    \lambda\sum_{j=1}^{n_1 \wedge n_2} \frac{\sigma_j(A)}{w_j} +
    \frac{\tau}{2} \norm{A}_2^2 \Big\}
  \end{equation*}
  has a unique solution, given by $\frac{1}{1+\tau} S_\lambda^w(B)$,
  where $S_\lambda^w(B)$ is the weighted soft-thresholding
  operator~\eqref{eq:s-lambda-w-def}.
\end{proposition}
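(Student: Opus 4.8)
The plan is to reduce the matrix minimization to a separable one-dimensional problem by means of von Neumann's trace inequality. First expand the quadratic part,
\begin{equation*}
  \frac12\norm{A-B}_2^2 + \frac{\tau}{2}\norm{A}_2^2 = \frac{1+\tau}{2}\norm{A}_2^2 - \inr{A,B} + \frac12\norm{B}_2^2 ,
\end{equation*}
use $\norm{A}_2^2 = \sum_j \sigma_j(A)^2$, and bound $\inr{A,B} = \tr(A^\top B) \le \sum_j \sigma_j(A)\sigma_j(B)$ by von Neumann's trace inequality. Setting $g_j(s) = \frac{1+\tau}{2}s^2 - \sigma_j(B)\,s + \frac{\lambda}{w_j}\,s$ (with the convention $\lambda/0 = +\infty$), this gives, for every $A$,
\begin{equation*}
  \frac12\norm{A-B}_2^2 + \frac{\tau}{2}\norm{A}_2^2 + \lambda\sum_{j=1}^{n_1\wedge n_2}\frac{\sigma_j(A)}{w_j} \;\ge\; \frac12\norm{B}_2^2 + \sum_{j=1}^{n_1\wedge n_2} g_j\bigl(\sigma_j(A)\bigr) .
\end{equation*}

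Each $g_j$ is a strictly convex quadratic (here $1+\tau>0$ is used), whose unique minimizer over $[0,+\infty)$ is $s_j^\star := \frac{1}{1+\tau}\bigl(\sigma_j(B) - \lambda/w_j\bigr)_+$. The key structural observation is that, since $w_1 \ge \cdots \ge w_{n_1\wedge n_2} \ge 0$, the thresholds $\lambda/w_j$ are non-decreasing in $j$ while the $\sigma_j(B)$ are non-increasing; hence $(s_j^\star)_j$ is non-increasing, so it is a legitimate sequence of singular values. This is exactly where the monotonicity hypothesis on $w$ is used. The matrix $A^\star := \frac{1}{1+\tau}S_\lambda^w(B)$, whose (thin) SVD is obtained from that of $B$ by replacing each $\sigma_j(B)$ by $(1+\tau)s_j^\star$, therefore has singular values exactly $s_j^\star$ and, sharing the singular vectors of $B$, attains equality in von Neumann's inequality. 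Both displayed inequalities are then equalities at $A^\star$, so $A^\star$ realizes the lower bound $\frac12\norm{B}_2^2 + \sum_j g_j(s_j^\star)$ and is a minimizer; in particular existence holds.

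For uniqueness, let $A$ be an arbitrary minimizer. Then both inequalities above are tight at $A$. Tightness of the second one forces each $\sigma_j(A)$ to be a minimizer of $g_j$ over $[0,+\infty)$, hence $\sigma_j(A)=s_j^\star$ for all $j$ by strict convexity of $g_j$. Tightness of the first one is precisely the equality case of von Neumann's trace inequality, which forces $A$ and $B$ to admit a common singular value decomposition; combined with $\sigma_j(A)=s_j^\star$ this identifies $A$ with $A^\star$. Making this last step precise — i.e.\ analysing the equality case in von Neumann's inequality, in particular when $B$ has repeated or vanishing singular values — is the part that requires the most care; everything else is bookkeeping, and the argument is uniform in $\lambda,\tau\ge 0$.
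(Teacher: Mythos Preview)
Your proof is correct and follows essentially the same route as the paper: both expand the quadratic part and use von Neumann's trace inequality (which the paper phrases as the ``variational characterization of singular values'') to reduce the matrix problem to a separable scalar minimization, then read off the soft-thresholded values and note that the monotonicity of $w$ makes them a valid singular-value sequence. If anything, your handling of uniqueness via the equality case of von Neumann is more explicit than the paper's, which simply exhibits the minimizer.
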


\begin{proof}[Proof of Proposition~\ref{prop:weighted-nuclear-prox}]
  Denote for short $q = n_1\wedge n_2$ and write the SVD of $A$ as $A
  = U \Sigma V^\top = \sum_{j=1}^q \sigma_j u_j v_j^\top$ where $U =
  [u_1, \ldots, u_q]$, $V = [v_1, \ldots, v_q]$ and $\Sigma =
  \diag(\sigma_1, \ldots, \sigma_q)$. We have
  \begin{equation*}
     \norm{A - B}_2^2 =
     \norm{B}_2^2 - 2 \sum_{j=1}^{q} \sigma_j u_j^\top B v_j +
    (1+\tau) \sum_{j=1}^{q} \sigma_j^2  
  \end{equation*}
  so that we want to minimize the function
  \begin{equation*}
    \phi(U, V, \Sigma) = \frac 12 \sum_{j=1}^{q} \Big(-2\sigma_j
    u_j^\top B v_j + (1+\tau)\sigma_j^2 \Big) +
    \lambda\sum_{j=1}^{q} \frac{\sigma_j}{w_j}  
  \end{equation*}
  over $U, V, \Sigma$ with the constraints $U^\top U = I$, $V^\top V =
  I$ and $\sigma_1 \geq \ldots \geq \sigma_q \geq 0$. Using the
  variational characterization of singular values, if $B = U' \Sigma'
  V'^\top$ is the SVD of $B$, where $U' = [u_1', \ldots, u_q']$, $V' =
  [v_1', \ldots, v_q']$, $\Sigma' = \diag(\sigma_1', \ldots,
  \sigma_q')$, we know that the maximum of $u^\top B v$ over all
  vectors $u$ and $v$ subject to $|u|_2 = |v|_2 = 1$ and $u$
  orthogonal to $u_1', \ldots, u_{j-1}'$ and $v$ orthogonal to $v_1',
  \ldots, v_{j-1}'$ is achieved at $u_j'$ and $v_j'$, and is equal to
  $\sigma_j'$. So the maximum of $\phi(U, V, \Sigma)$ is achieved at
  $U = U'$ and $V = V'$, and 
  \begin{equation*}
    \phi(U', V', \Sigma) = \frac 12 \sum_{j=1}^{q}  \Big( -2  \sigma_j
    \sigma_j' + (1+\tau) \sigma_j^2  + 2\lambda\frac{\sigma_j}{w_j}\Big).
  \end{equation*}
  It is easy to see that for each $j$ the the minimum over $\sigma_j$
  is achieved at $\sigma_j = \frac{1}{1+\tau} (\sigma_j' -
  \frac{\lambda}{w_j})_+$, which is non-increasing.
\end{proof}

As mentioned before, $S_\lambda^w$ is not a proximal operator. A nice
property about proximal operators is that they are firmly
non-expansive, see \cite{MR0274683}. Namely, if $T$ is the proximal
operator of some convex function over an Hilbert space $H$, then we
have
\begin{equation*}
  \norm{T x - Ty}^2 \leq \norm{x - y}^2 - \norm{x - y - (T x - T
    y)}^2
\end{equation*}
for any $x, y \in H$. However, it turns out that we can prove, using a
direct analysis, that $S_\lambda^w$ is non-expansive. Once again, the
proof uses a direct and technical analysis (since we cannot use
arguments based on subdifferential computations), while the property of
firm-nonexpansivity of proximal operators is an easy consequence of
their definition.

\begin{proposition}
  \label{prop:S-w-lipshitz}
  Let $w_1 \geq \cdots \geq w_{n_1 \wedge n_2} \geq 0, \lambda \geq
  0$. Then, for any $A, B \in \R^{n_1 \times n_2}$, we have
  \begin{equation*}
    \norm{S_\lambda^w(A) - S_\lambda^w(B)}_2 \leq \norm{A - B}_2.
  \end{equation*}
\end{proposition}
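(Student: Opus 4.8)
The plan is to turn the matrix inequality into a scalar/combinatorial one by expanding everything in the two singular value decompositions involved. First I would pass to the square case $n_1=n_2=n$ by padding $A,B$ with zero rows or columns (which alters neither side), and write $A=U_A\diag(a)V_A^\top$, $B=U_B\diag(b)V_B^\top$ with $a=\sigma(A)$, $b=\sigma(B)$ non-increasing. Set $\mu_j=\lambda/w_j$; the hypothesis $w_1\ge\cdots\ge w_{n_1\wedge n_2}\ge0$ enters \emph{only} through the monotonicity $\mu_1\le\mu_2\le\cdots$. Put $\alpha_j=(a_j-\mu_j)_+$, $p_j=a_j-\alpha_j=\min(a_j,\mu_j)$, and define $\beta_j,q_j$ from $b$ in the same way. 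By the argument in the proof of Proposition~\ref{prop:weighted-nuclear-prox}, $(\alpha_j)$ and $(\beta_j)$ are again non-increasing, so $\sigma_j(S_\lambda^w A)=\alpha_j$, $\sigma_j(S_\lambda^w B)=\beta_j$, and in particular $\norm{S_\lambda^w A}_2\le\norm{A}_2$ and $\norm{S_\lambda^w B}_2\le\norm{B}_2$ coordinatewise in the singular values.

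Expanding the two squared Frobenius norms and cancelling, the target $\norm{S_\lambda^w A-S_\lambda^w B}_2^2\le\norm{A-B}_2^2$ is equivalent to
\begin{equation*}
\inr{A,B}-\inr{S_\lambda^w A,S_\lambda^w B}\le\tfrac12\bigl(\norm{A}_2^2-\norm{S_\lambda^w A}_2^2\bigr)+\tfrac12\bigl(\norm{B}_2^2-\norm{S_\lambda^w B}_2^2\bigr).
\end{equation*}
The right-hand side is explicit: since $\alpha_jp_j=\alpha_j\mu_j$ one has $a_j^2-\alpha_j^2=2\alpha_j\mu_j+p_j^2$, so it equals $\sum_j\alpha_j\mu_j+\sum_j\beta_j\mu_j+\tfrac12\sum_jp_j^2+\tfrac12\sum_jq_j^2$. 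For the left-hand side the crucial point is that $S_\lambda^w A$ and $S_\lambda^w B$ reuse the singular vectors of $A$ and $B$: with $c_{ij}=(U_A^\top U_B)_{ij}(V_A^\top V_B)_{ij}$ one gets $\inr{A,B}=\sum_{ij}a_ib_jc_{ij}$ and $\inr{S_\lambda^w A,S_\lambda^w B}=\sum_{ij}\alpha_i\beta_jc_{ij}$ with the \emph{same} coefficients $c_{ij}$; hence, using $a_i=\alpha_i+p_i$ and $b_j=\beta_j+q_j$, their difference is $\sum_{ij}(\alpha_iq_j+p_i\beta_j+p_iq_j)c_{ij}$, each $c_{ij}$ appearing with a non-negative weight.

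Next I would bound $|c_{ij}|\le\tfrac12\bigl((U_A^\top U_B)_{ij}^2+(V_A^\top V_B)_{ij}^2\bigr)$, note that $\bigl((U_A^\top U_B)_{ij}^2\bigr)_{ij}$ and $\bigl((V_A^\top V_B)_{ij}^2\bigr)_{ij}$ are doubly stochastic, and dominate the left-hand side by an average of two expressions $\sum_{ij}(\alpha_iq_j+p_i\beta_j+p_iq_j)s_{ij}$ over a doubly stochastic $S=(s_{ij})$; by Birkhoff's theorem and linearity in $S$ it suffices to treat permutation matrices. The bound $\sum_ip_iq_{\pi(i)}\le\tfrac12\sum_jp_j^2+\tfrac12\sum_jq_j^2$ removes the $p_iq_j$ part, leaving the scalar inequality $\sum_i\alpha_iq_{\pi(i)}+\sum_ip_i\beta_{\pi(i)}\le\sum_j\alpha_j\mu_j+\sum_j\beta_j\mu_j$, which I would try to settle by exploiting that $a$ is non-increasing, $\mu$ non-decreasing, hence $\alpha$ non-increasing and $p=\min(a,\mu)$ squeezed between them (and symmetrically for $b$). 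A possibly cleaner substitute for this whole step is the layer-cake identity $(t-\mu_j)_+=\int_0^\infty\mathbf 1\{t>\mu_j+u\}\,du$, which writes $S_\lambda^w(B)=\int_0^\infty\sum_{j\le m(u,B)}u_j(B)v_j(B)^\top\,du$ for a non-increasing level $m(\cdot,B)$ read off from $\mu$ and $\sigma(B)$, reducing the claim to Lipschitz estimates for truncated SVDs.

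The step I expect to be the genuine obstacle is the control of these cross terms $\sum_{ij}(\cdots)c_{ij}$: the coefficient $c_{ij}$ encodes the relative position of the two pairs of singular subspaces, and the delicate regime is precisely when the singular-value orders of $A$ and $B$ disagree, so that the non-uniform thresholds $\mu_j$ attach to different geometric directions for $A$ than for $B$. The naive triangle-inequality splitting of $c_{ij}$ above discards exactly the coupling between $U_A^\top U_B$ and $V_A^\top V_B$, so I would expect to have to keep $c_{ij}$ intact and quantify how the monotonicity of $(\mu_j)$ penalizes order-reversing configurations; this is where the ``direct and technical'' analysis the authors allude to has to carry the weight, and the point I would scrutinize most carefully.
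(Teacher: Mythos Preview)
Your reduction to controlling $\sum_{ij}(\alpha_iq_j+p_i\beta_j+p_iq_j)c_{ij}$ is correct, but the Birkhoff detour cannot close. After replacing $|c_{ij}|$ by $\tfrac12\bigl((U_A^\top U_B)_{ij}^2+(V_A^\top V_B)_{ij}^2\bigr)$ and reducing to permutation matrices, you need the scalar inequality $\sum_i\alpha_iq_{\pi(i)}+\sum_ip_i\beta_{\pi(i)}\le\sum_j(\alpha_j+\beta_j)\mu_j$ for \emph{every} permutation $\pi$, and this is false: with $n=2$, $\mu=(0,1)$, $a=(2,0)$, $b=(10,2)$ one has $\alpha=(2,0)$, $p=(0,0)$, $\beta=(10,1)$, $q=(0,1)$, and for the transposition $\pi=(1\,2)$ the left side equals $\alpha_1q_2=2$ while the right side equals $\beta_2\mu_2=1$. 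Your own diagnosis is exactly right---discarding the coupling between $U_A^\top U_B$ and $V_A^\top V_B$ is fatal---and the layer-cake alternative is not a shortcut either, since the integrand $\sum_{j\le m(u,B)}u_j(B)v_j(B)^\top$ involves singular-vector projections, which are not Lipschitz near repeated singular values.

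The paper avoids the permutation obstruction by staying at the matrix level one step longer. Instead of expanding in the array $c_{ij}$, it splits
\[
\inr{A,B}-\inr{\bar A,\bar B}=\inr{A-\bar A,\,B-\bar B}+\inr{\bar A,\,B-\bar B}+\inr{A-\bar A,\,\bar B}
\]
and applies von Neumann's trace inequality $\inr{X,Y}\le\sum_j\sigma_j(X)\sigma_j(Y)$ to each of the three pieces separately. Because $A-\bar A$ and $\bar A$ are built on the singular vectors of $A$ (and likewise $B-\bar B$, $\bar B$ on those of $B$), this produces the bound $\sum_j\bigl(p_jq_j+\alpha_jq_j+p_j\beta_j\bigr)$ with the indices \emph{aligned}---precisely the $\pi=\mathrm{id}$ case of your inequality, which does go through since $p_j\le\mu_j$ and $q_j\le\mu_j$ pointwise. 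What remains is a short case analysis on the relative positions of $\bar r_1,\bar r_2,r_1,r_2$. The ingredient missing from your proposal is this three-term matrix decomposition applied \emph{before} passing to scalars; once you have it, the order-reversing configurations you were worried about never arise.
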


\begin{proof}[Proof of Proposition~\ref{prop:S-w-lipshitz}]
  Let us assume without loss of generality that $\lambda = 1$. Write
  the SVD of $A$ and $B$ as $A = U_1 \Sigma_1 V_1^\top$ and $B = U_2
  \Sigma_2 V_2^\top$ where
  $\Sigma_1=\diag[\sigma_{1,1},\ldots,\sigma_{1,r_1}]$,
  $\Sigma_2=\diag[\sigma_{2,1},\ldots,\sigma_{2,r_2}]$ and $r_1$
  (resp. $r_2$) stands for the rank of $A$ (resp. $B$).  We also write
  for short $\bar A = S_1^w(A) = U_1 \bar \Sigma_1 V_1^\top$ and $\bar
  B = S_1^w(B) = U_2 \bar \Sigma_2 V_2^\top$ where $\bar \Sigma_1 =
  \diag[ (\sigma_{1,1} - 1/w_1)_+ , \ldots, (\sigma_{1, r_1} -
  1/w_{r_1})_+]$ and $\bar \Sigma_2 = \diag[ (\sigma_{2,1} - 1/w_1)_+
  , \ldots, (\sigma_{2, r_1} - 1/w_{r_2})_+]$. We want to prove that
  $\norm{A - B}_2^2 - \norm{\bar A - \bar B}_2^2 \geq 0$. First use
  the decomposition
  \begin{align*}
    \norm{A - B}_2^2 - &\norm{\bar A - \bar B}_2^2 = \norm{A}_2^2 -
    \norm{\bar A}_2^2 + \norm{B}_2^2 - \norm{\bar
      B}_2^2 - 2 \inr{A, B} + 2 \inr{\bar A, \bar B} \\
    &= \sum_{j=1}^{r_1} \sigma_{1, j}^2 - \sum_{j=1}^{\bar r_1} \Big(
    \sigma_{1, j} - \frac{1}{w_j} \Big)^2 + \sum_{j=1}^{r_2}
    \sigma_{2, j}^2 - \sum_{j=1}^{\bar r_2} \Big( \sigma_{2, j}^2 -
    \frac{1}{w_j} \Big)^2 \\
    & \quad - 2 \big(\inr{A, B} - \inr{\bar A, \bar B}\big),
  \end{align*}
  where we take $\bar r_1$ such that $\sigma_{1,j} > 1 / w_j$ for $j
  \leq \bar r_1$ and $\sigma_{1,j} \leq 1 / w_j$ for $j \geq \bar r_1
  + 1$, and similarly for $\bar r_2$. We decompose
  \begin{equation}
    \label{eq:dec-inr-lemma-lip}
    \inr{A, B} - \inr{\bar A, \bar B} = \inr{A - \bar A, B - \bar B}
    + \inr{\bar A, B - \bar B} + \inr{A - \bar A, \bar B} 
  \end{equation}
  Using von Neumann's trace inequality $\inr{X,Y} \leq
  \sum_{j} \sigma_j(X) \sigma_j(Y)$ (see for instance \cite{MR832183},
  Section~7.4.13), it follows for the first term
  of~\eqref{eq:dec-inr-lemma-lip} that
  \begin{equation*}
    \inr{A - \bar A, B - \bar B}
    \leq \sum_{j=1}^{r_1 \wedge r_2} (\Sigma_{1} - \bar
    \Sigma_{1})_{j,j} (\Sigma_{2} -  \bar \Sigma_{2})_{j,j}.
  \end{equation*}
  Using the same argument for the two other terms
  of~\eqref{eq:dec-inr-lemma-lip}, we obtain
  \begin{align*}
    \inr{A, B} - \inr{\bar A, \bar B} &\leq \sum_{j=1}^{r_1 \wedge
      r_2} \Big( (\Sigma_{1} - \bar \Sigma_{1})_{j,j}(\Sigma_{2} -
    \bar \Sigma_{2})_{j,j} + (\bar \Sigma_{1})_{j,j}(\Sigma_2
    - \bar  \Sigma_{2})_{j,j} \\
    & +(\Sigma_{1} - \bar \Sigma_{1})_{j,j} (\bar
    \Sigma_{2})_{j,j}\Big), \\
  \end{align*}
 We explore the case $r_1 \leq r_2$ and $\bar
  r_1 \leq \bar r_2$; the other cases follow the same argument. We have 
  \begin{equation*}
    \inr{A, B} - \inr{\bar A, \bar B} \leq \sum_{j = 1}^{\bar r_1}
    \frac{\sigma_{1, j}}{w_j} +   \Big(\sigma_{2, j} - \frac{1}{w_j}
    \Big) \frac{1}{w_j} + \sum_{j = \bar r_1 +
      1}^{r_1} \sigma_{1, j} \sigma_{2, j},
  \end{equation*}
  so, an easy computation leads to
  \begin{align*}
    \norm{A - B}_2^2 - \norm{\bar A - \bar B}_2^2 &\geq \sum_{j=\bar
      r_2 + 1}^{r_1} \sigma_{1, j}^2 + \sum_{j=\bar r_2 + 1}^{r_2}
    \sigma_{2, j}^2 - 2 \sum_{j=\bar r_2 + 1}^{r_1} \sigma_{1, j}
    \sigma_{2, j} \\
    & \quad + \sum_{j=\bar r_1 + 1}^{\bar r_2} \Big( \sigma_{1, j}^2 -
    2 \sigma_{1, j} \sigma_{2, j} + \frac{2 \sigma_{2, j}}{w_j} -
    \frac{1}{w_j^2} \Big).
  \end{align*}
  We obviously have $\sum_{j=\bar r_2 + 1}^{r_1} \sigma_{1, j}^2 +
  \sum_{j=\bar r_2 + 1}^{r_2} \sigma_{2, j}^2 - 2 \sum_{j=\bar r_2 +
    1}^{r_1} \sigma_{1, j} \sigma_{2, j} \geq 0$. By definition of
  $\bar r_2$ and $\bar r_1$, we have $\sigma_{1, j} \leq 1 / w_j <
  \sigma_{2, j}$ for any $j = \bar r_1+1, \ldots, \bar r_2$. Hence, we
  have
  \begin{equation*}
    \sigma_{1, j}^2 - 2 \sigma_{1, j} \sigma_{2, j} + \frac{2
      \sigma_{2, j}}{w_j} - \frac{1}{w_j^2} = (\sigma_{1, j} -
    2\sigma_{2, j} + 1/w_j) (\sigma_{1, j} - 1/w_j) \geq 0,
  \end{equation*}
  which concludes the proof of Proposition~\ref{prop:S-w-lipshitz}.
\end{proof}

\begin{proof}[Proof of Theorem~\ref{prop:existence-and-unicity}]
  Consider the sequence $(A^k)_{k \geq 0}$ defined
  in~\eqref{eq:iterations}. Using Proposition~\ref{prop:S-w-lipshitz}
  we have for any $k\geq1$
  \begin{align*}
    \norm{A^{k+1} - A^{k}}_2 &= \frac{1}{(1 + \tau)}
    \norm{S_{\lambda}^w(\po(A_0) + \po^\perp(A^k)) -
      S_{\lambda}^w(\po(A_0) + \po^\perp(A^{k-1}))}_2 \\
    &\leq \frac{1}{(1 + \tau)} \norm{\po^\perp(A^k) -
      \po^\perp(A^{k-1})}_2 \leq \frac{1}{(1 + \tau)} \norm{A^k - A^{k-1}}_2,
  \end{align*}
  so that $\norm{A^{k+1} - A^{k}}_2 \leq (1 +
      \tau)^{-k} \norm{A^{1} - A^{0}}_2.$
  This proves that $\sum_{k \geq 0} \norm{A^{k+1} - A^{k}}_2 <
  +\infty$, so the limit of $(A^k)_{k \geq 0}$ exists and is given by
  \begin{equation*}
    A^\infty = \sum_{k \geq 0} (A^{k+1} - A^{k})+A^0.
  \end{equation*}
  Now, by continuity of $S_\lambda^w$ and $\cP_{\Omega}^\perp$, taking
  the limit on both sides of~\eqref{eq:iterations}, we obtain that
  $A^\infty$ satisfies the fixed-point equation
  \begin{equation*}
    A^\infty = \frac{1}{1 + \tau}
    S_{\lambda}^w(\cP_{\Omega}^\perp(A^\infty) + \cP_\Omega(A_0)),
  \end{equation*}
  so we have found at least one solution. Let us show now that it is
  unique, so that $\hat A_\lambda^w = A^\infty$: consider a
  matrix $B$ satisfying the same fixed point equation. We have 
  \begin{align*}
    \norm{B - A^\infty}_2 &= \frac{1}{(1 + \tau)^2}
    \norm{S_{\lambda}^w(\po(A_0) + \po^\perp(B)) -
      S_{\lambda}^w(\po(A_0) + \po^\perp(A^\infty))}_2 \\
    &\leq \frac{1}{(1 + \tau)} \norm{\po^\perp(B) -
      \po^\perp(A^\infty)}_2 \leq \frac{1}{(1 + \tau)} \norm{B - A^\infty}_2,
  \end{align*}
  therefore $B=A^\infty$.   
\end{proof}

\begin{proof}[Proof of Theorem~\ref{thm:algorithm_convergence}]
  We know from the proof of Theorem~\ref{prop:existence-and-unicity}
  that
  \begin{equation*}
    \| \hat A_\lambda^w - A^n \|_2 = \norm{\sum_{k \geq n} (A^{k+1} -
      A^k)}_2 \leq \sum_{k \geq n} \frac{1}{(1 +
      \tau)^{k}} \norm{A^{1} - A^{0}}_2,
  \end{equation*}
  leading to the conclusion.
\end{proof}

\bibliographystyle{plain}

\footnotesize


\end{document}